\newcommand{\valuation}{valuation\xspace}
\newcommand{\valuations}{valuations\xspace}
\title{\bf Real Equation Systems with Alternating Fixed-points\\
            \small(full version with proofs)} 
\author{\textsf{Jan Friso Groote and Tim A.C.\ Willemse}\vspace{1ex}\\
{\small Department of Mathematics and Computer Science }\\
{\small Eindhoven University of Technology, The Netherlands}\\
\small \texttt{\{J.F.Groote,T.A.C.Willemse\}@tue.nl}}
\date{}
\newcounter{theoremcnt}[section]
\renewcommand{\thetheoremcnt}{\thesection.\arabic{theoremcnt}}
\newenvironment{definition}%
{\begin{trivlist}\refstepcounter{theoremcnt}
\item{\bf Definition \thetheoremcnt.}}
{\end{trivlist}}
\newenvironment{definition-arg}[1]%
{\begin{trivlist}\refstepcounter{theoremcnt}
\item{\bf Definition \thetheoremcnt\ (#1).}}
{\end{trivlist}}
\newenvironment{lemma}%
{\begin{trivlist}\refstepcounter{theoremcnt}
\item{\bf Lemma \thetheoremcnt.}}
{\end{trivlist}}
\newenvironment{lemma-arg}[1]%
{\begin{trivlist}\refstepcounter{theoremcnt}
\item{\bf Lemma \thetheoremcnt\ (#1).}}
{\end{trivlist}}
\newenvironment{theorem}%
{\begin{trivlist}\refstepcounter{theoremcnt}
\item{\bf Theorem \thetheoremcnt.}}
{\end{trivlist}}
\newenvironment{theorem-arg}[1]%
{\begin{trivlist}\refstepcounter{theoremcnt}
\item{\bf Theorem \thetheoremcnt\ (#1).}}
{\end{trivlist}}
\newenvironment{exercise*}%
{\begin{trivlist}\refstepcounter{theoremcnt}
\item{\bf $\bigstar$Exercise \thetheoremcnt.}}{\end{trivlist}}
\newenvironment{proof}%
{\begin{trivlist} \item{\bf Proof.}}{~\hfill$\quad\Box$\end{trivlist}}
\newcommand{\Real}{\mathbb{R}}
\newcommand{\true}{\mathit{true}}
\newcommand{\false}{\mathit{false}}
\newcommand{\vars}{\ensuremath{\mathcal{X}}}
\newcommand{\sembracks}[1]{\ensuremath{|\!|#1|\!|}}
\newcommand{\sem}[2][]{\ensuremath{\ifthenelse{\equal{#1}{}}{\sembracks{#2}}{\sembracks{#2}{#1}}}}
\newcommand{\bnd}{\ensuremath{\textsf{bnd}}}
\newcommand{\occ}{\ensuremath{\textsf{occ}}}
\newcommand{\Act}{{\cal A}}
\newcommand{\pijl}[1]{\mathord{\: \stackrel{#1}{\rightarrow} \:}}
\newcommand{\cond}[3]{ #1 \Rightarrow #2\diamond #3}
\newcommand{\conda}[3]{ #1 \rightarrow #2\diamond #3}
\newcommand{\eqinf}{\mathit{eq}_{\infty}}
\newcommand{\eqninf}{\mathit{eq}_{-\!\infty}}
\newcommand{\lsb}{[\![}
\newcommand{\rsb}{]\!]}
\newcommand{\Sol}{\mathit{Sol}}
\newcommand{\equations}{\mathit{Eq}}
\newcommand{\RHS}{\textit{rhs}}
\begin{document}
\maketitle

\begin{abstract}\noindent%
We introduce the notion of a Real Equation System (RES), which lifts Boolean Equation Systems (BESs) 
to the domain of extended real numbers. Our RESs allow arbitrary nesting of least and greatest fixed-point operators. 
We show that each RES can be rewritten into an equivalent RES in normal form. These normal forms provide the basis 
for a complete procedure to solve RESs. This employs the elimination of the fixed-point variable at the left side 
of an equation from its 
right-hand side, combined with a technique often referred to as Gau\ss-elimination. We illustrate how this 
framework can be used to verify quantitative modal formulas with alternating fixed-point operators 
interpreted over probabilistic labelled transition systems.

\end{abstract}
\section{Introduction}
The modal mu-calculus is a logic that allows to formulate and verify a very wide range 
of properties on behaviour, far more expressive than virtually any other behavioural logic around \cite{BradfieldW18,BradfieldS07}. 
For instance, 
CTL and LTL can be mapped to it, but the reverse is not possible. 
By allowing data parameters in the fixed point variables in modal formulas,
this can even be done linearly, without loss of computational effectiveness \cite{DBLP:journals/tcs/CranenGR11}. 
Using alternating fixed-points, the modal mu-calculus can intrinsically express various forms of fairness, which in other
logics can often only be achieved by adding special fairness operators.

An effective way to evaluate a modal property on a labelled transition system is by translating both to a single
Boolean Equation System (BES) with alternating fixed-points \cite{DBLP:conf/tacas/Mader95, Mateescu98}. 
Exactly if the initial boolean variable of the obtained BES has the solution true, the property is valid for the labelled transition system.
A BES with alternating fixed-points is equivalent to a parity game \cite{Mader97,BradfieldS07}. There are many algorithms to solve 
BESs and parity games \cite{Zielonka98,Calude17,JurdzinskiL17,Dijk18}. Although, it is a long standing open problem whether a polynomial algorithm exists to solve 
BESs \cite{Calude17,JurdzinskiL17}, the existing algorithms work remarkably well in practical contexts. 

For a while now, it has been argued that modal logics can become even more effective if they provide quantitative answers
\cite{DBLP:conf/popl/Henzinger10, DBLP:conf/fm/HenzingerS06}, such as durations, probabilities and expected values. 
In this paper we lift boolean equation systems to real numbers to form a framework for the evaluation
of quantitative modal formulas, and call the result \textit{Real Equation Systems} (\!\textit{RESs}), i.e.,
fixed-point equation systems over the domain of the extended reals, $\Real\cup \{-\infty, \infty\}$. 
Conjunction and disjunction are interpreted as minimum and maximum, and new operators such as addition and multiplication
with positive constants are added. A typical example of a real equation system is the following
\[\begin{array}{l} 
\mu X = (\frac{1}{2} X+1)\vee(\frac{1}{5} Y+3),\\
\nu Y = ((\frac{1}{10} Y-10)\vee(2 X+5))\wedge 17.
\end{array}\]
Based on Tarski's fixed-point theorem, this real equation system has a unique solution. Using the method provided in this 
paper we can determine this solution using algebraic manipulation. 
In the case above, see Section \ref{sec:singleequation}, the second fixed-point equation can be simplified to
$\nu Y=-\frac{100}{9}\vee ((2 X+5)\wedge 17)$. It is sound to substitute this in the first equation, which becomes
$\mu X = (\frac{1}{2} X+1)\vee\frac{7}{9}\vee((\frac{2}{5}X+4)\wedge\frac{32}{5})$.
This equation can be solved for $X$ yielding $X=\frac{32}{5}$, from which it directly follows that $Y=17$. 

Concretely, this paper has the following results. We define real equation systems with alternating fixed-points.
The base syntax for expressions is equal to that of \cite{DBLP:conf/esop/GawlitzaS07} with constants, 
minimum, maximum, addition and multiplication with positive real constants. We add four additional operators,
namely two conditional operators, and two tests for infinity, which turn out to be required to algebraically solve arbitrary real 
equation systems. 

We provide algebraic laws that allow to transform any expression to \textit{conjunctive/disjunctive normal form}. 
Based on this normal form we provide rules that allow to eliminate each variable bound in the left-hand side of an equation from the right-hand side of that equation. This enables `Gau\ss-elimination', developed for BESs,
using which any real equation system can be solved. 

We provide a quantitative modal logic, and define how a quantitative formula and a (probabilistic) labelled transition system ((p)LTS) 
can be transformed into a RES. The solution of the initial variable of this equation system is
equal to the evaluation of the quantitative formula on the labelled transition system. We also briefly touch upon the embedding
of BESs into RESs. 

The approach in this paper follows the tradition of boolean equation systems \cite{Larsen92, DBLP:conf/tacas/Mader95,Mader97}.
By allowing data parameters in the fixed-point variables we obtain Parameterised Boolean Equation Systems (PBESs) 
which is
a very expressive framework that forms the workhorse for model checking \cite{Mateescu98, GrooteW05, 
DBLP:books/mit/GrooteM2014}. In this paper we do not address such parametric extensions, as they are pretty straightforward, 
but in combination with parameterised quantitative modal logic, it 
will certainly provide a very versatile framework for quantitative model checking.

There are a number of extensions of the boolean equation framework to the setting of reals but these 
typically limit themselves to only single fixed-points. 
In \cite{DBLP:conf/esop/GawlitzaS07} the minimal integer solutions for a set of equations with only
minimal fixed-points is determined. In \cite{DBLP:journals/toplas/GawlitzaS11} a polynomial algorithm is provided
to find the minimal solution for a set of real equation systems. 
In \cite{DBLP:conf/birthday/Bacci0JL22} convex lattice equation systems are introduced, also restricted to 
a single fixed-point. In that paper a proof system is given 
to show that all models of the equations are consistent, meaning that the evaluation of a quantitative modal formula
is limited by some upper-bound.

In~\cite{DBLP:journals/fuin/MioS17a}, the {\L}ukasiewicz {\(\mu\)}-calculus is studied, which resembles RESs
restricted to the interval $[0,1]$.
This logic does allow minimal and maximal fixed-points.
They provide two algorithmic ways of computing the solutions for formulas in their logic, \emph{viz.}\ an indirect 
method that builds formulas in the first-order theory of linear arithmetic and exploits quantifier elimination, 
and a method that uses iteration to refine successive approximations of conditioned linear expressions. Embedding 
our logic in the {\L}ukasiewicz {\(\mu\)}-calculus can be done by mapping the extended reals onto the interval 
$[0,1]$ using an appropriate sigmoid function. But such a mapping does not map our addition and constant multiplication
to available counterparts in the {\L}ukasiewicz {\(\mu\)}-calculus, which prevents using algorithms for 
{\L}ukasiewicz $\mu$-terms \cite{DBLP:journals/tcs/Kalorkoti18,DBLP:journals/fuin/MioS17a} to our setting.
However, as the {\L}ukasiewicz {\(\mu\)}-calculus is directly encodable into the RES framework, 
all our results are directly 
applicable to the {\L}ukasiewicz {\(\mu\)}-calculus. 
The proofs of all lemmas and theorems are given in Appendix \ref{sec:proofs}.
\section{Expressions and normal forms}
We work in the setting of \textit{extended real numbers}, i.e., $\Real\cup\{\infty,{-}\infty\}$, denoted by $\hat{\Real}$. 
We assume the normal total 
ordering $\leq$ on $\hat{\Real}$ where $-\infty\leq x$ and $x\leq \infty$ for
all $x\in \hat{\Real}$.  
Throughout this text we employ a set $\vars$ of variables and
\textit{valuations} $\eta:\vars\rightarrow \hat{\Real}$ that map variables to extended reals. 
We write $\eta(X)$ to apply $\eta$ to $X$, and $\eta[X:=r]$ to adapt \valuations\ by: 
\[
\eta[X:=r](Y)=
\left\{
\begin{array}{ll}
r&\textrm{if }X=Y,\\
\eta(Y)&\textrm{otherwise.}
\end{array}
\right.
\]

We consider expressions over the set $\vars$ of variables with the following syntax.
\[
 e~::=~  X\mid d\mid c{\cdot}e \mid e+e\mid e\wedge e \mid e\vee e \mid \cond{e}{e}{e}\mid \conda{e}{e}{e}\mid \eqinf(e)\mid\eqninf(e)
\]
where $X\in\vars$, $d\in \hat{\Real}$ 
is a constant, $c\in \Real_{>0}$ a positive constant,
$+$ represents addition, $\wedge$ stands for minimum, $\vee$ for maximum,
$\cond{\_}{\_}{\_}$ and $\conda{\_}{\_}{\_}$ are 
conditional operators, and $\eqinf$ and $\eqninf$ are auxiliary functions to check for $\pm\infty$.
The conditional operators and the checks for infinity occur naturally while solving fixed-point equations
and therefore, we made them part of the syntax. 
We apply valuations to expressions, as in
$\eta(e)$, where $\eta$ distributes over all operators in the expression. 

The interpretation of these operators on the domain $\hat{\Real}$ is 
largely obvious. A variable $X$ gets a value by a valuation.
Multiplying expressions with a constant $c$ is standard, and yields 
$\pm\infty$ if applied on $\pm\infty$. The conditional operators, addition
and infinity operators are defined below where $e,e_1,e_2,e_3\in \hat{\Real}$. 
\[
\begin{array}{r@{\hspace{0.05cm}}c@{\hspace{0.05cm}}l@{\hspace{0.9cm}}r@{\hspace{0.05cm}}c@{\hspace{0.05cm}}l}
e_1 + e_2 &=&\multicolumn{4}{@{\hspace{0.05cm}}l}{\left \{ 
\begin{array}{ll}
e_1+e_2&\textrm{if }e_1,e_2\in \Real\textrm{,}\textrm{ i.e., apply normal addition},\\
\infty&\textrm{if }e_1=\infty\textrm{ or }e_2=\infty, \\
-\infty&\textrm{if }e_i{=}{-}\infty\textrm{ and }e_{3-i}\not=\infty\textrm{ for }i=1,2.
\end{array}\right.}\\
&&&&&\\
\cond{e_1}{e_2}{e_3}&=&\left\{
\begin{array}{ll}
e_2\wedge e_3&\textrm{if }e_1\leq 0,\\
e_3&\textrm{if }e_1>0.
\end{array}\right.\hspace{1cm}&
\conda{e_1}{e_2}{e_3}&=&\left\{
\begin{array}{ll}
e_2&\textrm{if }e_1< 0,\\
e_2\vee e_3&\textrm{if }e_1\geq 0.
\end{array}\right.\\
&&&&&\\
\eqinf(e) &=&\left \{ 
\begin{array}{ll}
\infty&\textrm{if }e=\infty,\\
-\infty &\textrm{if }e\not=\infty.
\end{array}\right.&
\eqninf(e) &=&\left \{ 
\begin{array}{ll}
\infty&\textrm{if }e\not=-\infty,\\
-\infty &\textrm{if }e=-\infty.
\end{array}\right.
\end{array}\]
Note that all defined operators are monotonic on $\hat{\Real}$. We have the identity 
$\eqinf(e)=e+-\infty$, and so, we do not treat $\eqinf$ as a primary operator. We write $e[X:=e']$ for
the expression representing the syntactic substitution of $e'$ for $X$ in $e$.
We write $\occ(e)$ for the set of variables from $\vars$ occurring in $e$. 
Table \ref{table:identities} contains many useful algebraic laws for our operators. 

The addition operator $+$ has as property that $-\infty+\infty=\infty+-\infty=\infty$. 
One may require the other natural addition operator $\hat{+}$, as used in \cite{DBLP:journals/toplas/GawlitzaS11},  
satisfying that $-\infty\hat{+}\infty=\infty\hat{+}-\infty=-\infty$.
It can be defined as follows: 
\[e_1\hat{+}e_2~=~ \cond{\eqninf(e_1)}{-\infty}{(\cond{\eqninf(e_2)}{-\infty}{(e_1+e_2)})}.\]

We can extend
the syntax with unary negation $-e$ with its standard meaning,
and, provided no variable occurs in the scope of its definition within an odd number of 
negations, negation can be eliminated using standard simplification rules. Therefore, we
do not consider it as a primary part of our syntax. At the end of Table \ref{table:identities} we list several identities
involving negation. 
Note that operators $+$ and $\hat{+}$ are each other's dual with regard to negation.

\begin{table}
\centering
\begin{tabular}{|l@{\hspace{0.2cm}}l@{\hspace{0.5cm}}l@{\hspace{0.2cm}}l|}
\hline
&&&\\
$\textrm{I}_{\vee}$&$e\vee e=e$&$\textrm{I}_{\wedge}$&$e\wedge e=e$\\
$\textrm{D}^+_+$&$(e_1+e_2)+e_3 = e_1+(e_2+e_3)$&
$\textrm{C}+$&$e_1+e_2=e_2+e_1$\\
$\textrm{D}^\vee_\vee$&$(e_1\vee_2)\vee e_3 = e_1\vee(e_2\vee e_3)$&
$\textrm{C}\vee$&$e_1\vee e_2=e_2\vee e_1$\\
$\textrm{D}^\wedge_\wedge$&$(e_1\wedge e_2)\wedge e_3 = e_1\wedge(e_2\wedge e_3)$&
$\textrm{C}\wedge$&$e_1\wedge e_2=e_2\wedge e_1$\\
&&&\\
\hline
&&&\\
$\textrm{D}^{\Rightarrow}_{\Rightarrow}$&\multicolumn{3}{@{\hspace{0cm}}l|}{$\cond{(\cond{e_1}{e_2}{e_3)}}{f_1}{f_2}=\cond{((e_1\vee e_2)\wedge e_3)}{f_1}{f_2}$}
\\

$\textrm{D}^{\Rightarrow}_{\rightarrow}$&\multicolumn{3}{@{\hspace{0cm}}l|}
{$\conda{(\cond{e_1}{e_2}{e_3)}}{f_1}{f_2}=
\conda{e_1}{(\cond{e_2}{f_1}{f_2})}{(\cond{e_2\vee e_3}{f_1}{f_2})}$}
\\
$\textrm{D}^c_{\Rightarrow}$&$c{\cdot}(\cond{e_1}{e_2}{e_3})=\cond{e_1}{c{\cdot}e_2}{c{\cdot}e_3}$&&\\
$\textrm{D}^{+}_{\Rightarrow}$&\multicolumn{3}{@{\hspace{0cm}}l|}{$(\cond{e_1}{e_2}{e_3})+f=\cond{e_1}{(e_2+f)}{(e_3+f)}$}\\
$\textrm{D}^{\wedge}_{\Rightarrow}$&\multicolumn{3}{@{\hspace{0cm}}l|}{$(\cond{e_1}{e_2}{e_3})\wedge f=\cond{e_1}{(e_2\wedge f)}{(e_3\wedge f)}$}\\
$\textrm{D}^{\vee}_{\Rightarrow}$&\multicolumn{3}{@{\hspace{0cm}}l|}{$(\cond{e_1}{e_2}{e_3})\vee f=\cond{e_1}{(e_2\vee f)}{(e_3\vee f)}$}\\
$\textrm{D}^{\rightarrow}_{\rightarrow}$&\multicolumn{3}{@{\hspace{0cm}}l|}
{$\conda{(\conda{e_1}{e_2}{e_3)}}{f_1}{f_2}=\conda{(e_2\vee(e_1\wedge e_3))}{f_1}{f_2}$}
\\
$\textrm{D}^{\Rightarrow}_{\rightarrow}$&\multicolumn{3}{@{\hspace{0cm}}l|}
{$\cond{(\conda{e_1}{e_2}{e_3)}}{f_1}{f_2}=
\cond{e_1}{(\conda{e_2\wedge e_3}{f_1}{f_2})}{(\conda{e_3}{f_1}{f_2})}$}
\\
$\textrm{D}^c_{\rightarrow}$&$c{\cdot}(\conda{e_1}{e_2}{e_3})=\conda{e_1}{c{\cdot}e_2}{c{\cdot}e_3}$&&\\
$\textrm{D}{+}$&\multicolumn{3}{@{\hspace{0cm}}l|}{$(\conda{e_1}{e_2}{e_3})+f=\conda{e_1}{(e_2+f)}{(e_3+f)}$}\\
$\textrm{D}^{\wedge}_{\rightarrow}$&\multicolumn{3}{@{\hspace{0cm}}l|}{$(\conda{e_1}{e_2}{e_3})\wedge f=\conda{e_1}{(e_2\wedge f)}{(e_3\wedge f)}$}\\
$\textrm{D}^{\vee}_{\rightarrow}$&\multicolumn{3}{@{\hspace{0cm}}l|}{$(\conda{e_1}{e_2}{e_3})\vee f=\conda{e_1}{(e_2\vee f)}{(e_3\vee f)}$}\\
$\textrm{D}^{+}_{\wedge}$&$e_1+(e_2\wedge e_3)=(e_1+e_2)\wedge(e_1+e_3)$&
$\textrm{D}^{+}_{\vee}$&$e_1+(e_2\vee e_3)=(e_1+e_2)\vee(e_1+e_3)$\\
$\textrm{D}^c_{+}$&$c{\cdot}(e_1+e_2)=c{\cdot}e_1+c{\cdot}e_2$&
&\\
$\textrm{D}^c_{\wedge}$&$c{\cdot}(e_1\wedge e_2)=c{\cdot}e_1\wedge c{\cdot}e_2$&
$\textrm{D}^c_{\vee}$&$c{\cdot}(e_1\vee e_2)=c{\cdot}e_1\vee c{\cdot}e_2$\\

$\textrm{D}_{\vee}^{\wedge}$&$e_1\wedge (e_2 \vee e_3)=(e_1\wedge e_2)\vee(e_1\wedge e_3)$&
$\textrm{D}_{\wedge}^{\vee}$&$e_1\vee (e_2\wedge e_3)=(e_1\vee e_2)\wedge(e_1\vee e_3)$\\
&&&\\
\hline
&&&\\
$\textrm{D}^{\infty}_{\infty}$&$\eqinf(\eqinf(e))=\eqinf(e)$&
$\textrm{D}^{-\infty}_{\infty}$&$\eqninf(\eqinf(e))=\eqinf(e)$\\
$\textrm{D}^{\infty}_{-\infty}$&$\eqinf(\eqninf(e))=\eqninf(e)$&
$\textrm{D}^{-\infty}_{-\infty}$&$\eqninf(\eqninf(e))=\eqninf(e)$\\
$\textrm{D}^{\infty}_c$&$\eqinf(c{\cdot}e)=\eqinf(e)$&
$\textrm{D}^{-\infty}_c$&$\eqninf(c{\cdot}x)=\eqninf(x)$\\
$\textrm{D}^{\infty}_{+}$&\multicolumn{3}{@{\hspace{0cm}}l|}{$\eqinf(e_1+ e_2)=\eqinf(e_1)+\eqinf(e_2)=\eqinf(e_1)\vee\eqinf(e_2)$}\\
$\textrm{D}^{{-}\infty}_{+}$&\multicolumn{3}{@{\hspace{0cm}}l|}{$\eqninf(e_1+ e_2)=(\eqninf(e_1)\vee\eqinf(e_2))\wedge(\eqinf(e_1)\vee\eqninf(e_2))$}\\
$\textrm{D}^{\infty}_{\vee}$&$\eqinf(e_1\vee e_2)=\eqinf(e_1)\vee\eqinf(e_2)$&
$\textrm{D}^{{-}\infty}_{\vee}$&$\eqninf(e_1\vee e_2)=\eqninf(e_1)\vee\eqninf(e_2)$\\
$\textrm{D}^{\infty}_{\wedge}$&$\eqinf(e_1\wedge e_2)=\eqinf(e_1)\wedge\eqinf(e_2)$&
$\textrm{D}^{{-}\infty}_{\wedge}$&$\eqninf(e_1\wedge e_2)=\eqninf(e_1)\wedge\eqninf(e_2)$\\
$\textrm{E}^{\wedge}_{\infty}$&$\eqinf(e)\wedge\eqninf(e)=\eqinf(e)$&
$\textrm{E}^{\vee}_{{-}\infty}$&$\eqinf(e)\vee\eqninf(e)=\eqninf(e)$\\
$\textrm{D}^{\infty}_{\Rightarrow}$&\multicolumn{3}{@{\hspace{0cm}}l|}{$\eqinf(\cond{e_1}{e_2}{e_3})=\cond{e_1}{\eqinf(e_2)}{\eqinf(e_3)}$}\\
$\textrm{D}^{{-}\infty}_{\Rightarrow}$&\multicolumn{3}{@{\hspace{0cm}}l|}{$\eqninf(\cond{e_1}{e_2}{e_3})=\cond{e_1}{\eqninf(e_2)}{\eqninf(e_3)}$}\\
$\textrm{D}^{\infty}_{\rightarrow}$&\multicolumn{3}{@{\hspace{0cm}}l|}{$\eqinf(\conda{e_1}{e_2}{e_3})=\conda{e_1}{\eqinf(e_2)}{\eqinf(e_3)}$}\\
$\textrm{D}^{{-}\infty}_{\rightarrow}$&\multicolumn{3}{@{\hspace{0cm}}l|}{$\eqninf(\conda{e_1}{e_2}{e_3})=\conda{e_1}{\eqninf(e_2)}{\eqninf(e_3)}$}\\
&&&\\
\hline
&&&\\
$\textrm{D}^{-}_{c}$&$-c{\cdot}e=c{\cdot}\,{-}e$&  &\\
$\textrm{D}^{-}_{+}$&$-(e_1+e_2)=-e_1\hat{+} \,{-}e_2$&
$\textrm{D}^{-}_{\hat{+}}$&$-(e_1\hat{+}e_2)=-e_1{+} \,{-}e_2$\\
$\textrm{D}^{-}_{\vee}$&$-(e_1\vee e_2)=-e_1 \wedge -e_2$&
$\textrm{D}^{-}_{\wedge}$&$-(e_1\wedge e_2)=-e_1\vee -e_2$\\
$\textrm{D}^{-}_{\Rightarrow}$&$-(\cond{e_1}{e_2}{e_3})=\conda{-e_1}{-e_3}{-e_2}$&
$\textrm{D}^{-}_{\rightarrow}$&$-(\conda{e_1}{e_2}{e_3})=\cond{-e_1}{-e_3}{-e_2}$\\
$\textrm{D}^{-}_{\infty}$&$-\eqinf(e)=\eqninf(-e)$&
$\textrm{D}^{-}_{-\infty}$&$-\eqninf(e)=\eqinf(-e)$\\
&&&\\
\hline
\end{tabular}
\caption{Algebraic laws}
\label{table:identities}
\end{table}

We introduce normal forms, crucial to solve real equation systems, where the sum, conjunction and disjunction over empty domains of variables equal $0$, $\infty$ and $-\infty$, respectively.
\begin{definition}
Let $\vars$ be a set of variables. 
An expression $e$ is in \textit{simple conjunctive normal form} iff it has the shape 
\[\bigwedge_{i\in I}\bigvee_{j\in J_i} ((\sum_{X\in\vars_{ij}}c^X_{ij}{\cdot} X) +
(\sum_{X\in\vars'_{ij}}\eqninf(X)) + d_{ij})
\]
and
it is in \textit{simple disjunctive normal form} iff it has the shape 
\[\bigvee_{i\in I}\bigwedge_{j\in J_i} ((\sum_{X\in\vars_{ij}}c^X_{ij}{\cdot} X) +
(\sum_{X\in\vars'_{ij}}\eqninf(X)) +d_{ij})
\]
where $\vars_{ij}\subseteq \vars$ and $\vars'_{ij}\subseteq\vars$ are finite sets of variables, 
$c_{ij}^X\in \Real_{>0}$, and $d_{ij}\in\hat{\Real}$.

An expression $e$ is in \textit{conjunctive, resp.\ disjunctive normal form} iff
\begin{enumerate}
\item $e$ is in simple conjunctive, resp.\ disjunctive normal form, or
\item $e$ has the shape $\cond{e_1}{e_2}{e_3}$ or $\conda{e_1}{e_2}{e_3}$
where $e_1$ is in simple conjunctive, resp.\ disjunctive normal form and $e_2$ and $e_3$ are 
conjunctive resp.\ disjunctive normal forms.
\end{enumerate}
\end{definition}

\begin{lemma}
\label{la:simpleNormalForm}
Each expression $e$ not containing the conditional operators $\cond{e_1}{e_2}{e_3}$ or $\conda{e_1}{e_2}{e_3}$ can be
rewritten to a simple conjunctive or disjunctive normal form using the equations in Table \ref{table:identities}.
\end{lemma}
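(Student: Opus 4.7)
The plan is a structural induction on $e$, using the laws of Table~\ref{table:identities} to push scalar multiplication, addition and the infinity tests inward toward the leaves while pulling the lattice operators $\wedge$ and $\vee$ outward. I aim for the conjunctive normal form; the disjunctive case follows by a symmetric argument.

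First I eliminate every occurrence of $\eqinf$ using the identity $\eqinf(e')=e'+{-}\infty$ noted directly after the semantic definitions. Then, as a dedicated preliminary pass, I show by induction on the size of $e'$ that each subterm $\eqninf(e')$ can be rewritten so that $\eqninf$ is applied only to a variable (or folded into a constant). The inductive cases use $\textrm{D}^{-\infty}_{-\infty}$ for nested $\eqninf$, $\textrm{D}^{-\infty}_c$ for multiplication, $\textrm{D}^{-\infty}_\vee$ and $\textrm{D}^{-\infty}_\wedge$ for the lattice operators, and $\textrm{D}^{-\infty}_+$ for sums; only the last introduces new $\wedge$ and $\vee$ structure, but the two copies of $\eqninf$ it produces sit on strict subexpressions, so the recursion is well-founded.

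With $\eqninf$ restricted to variables, I run the main structural induction to rewrite $e$ as $\bigwedge_i\bigvee_j T_{ij}$, with each $T_{ij}$ a sum of atoms of the form $c\cdot X$, $\eqninf(X)$ and a constant. Variables, constants and $\eqninf(X)$ atoms already have this shape, and a conjunction of CNFs is a CNF. For $e_1\vee e_2$ I apply $\textrm{D}_{\wedge}^{\vee}$ to distribute the outer $\vee$ across the outer $\wedge$s of both arguments. For $c\cdot e'$ I push $c$ through the lattice skeleton using $\textrm{D}^{c}_{\wedge}$ and $\textrm{D}^{c}_{\vee}$ and through inner sums using $\textrm{D}^{c}_{+}$, then fold $c\cdot(c'\cdot X)$ into $(cc')\cdot X$, $c\cdot d$ into a constant, and $c\cdot\eqninf(X)$ into $\eqninf(X)$ (using $c>0$ and $\eqninf(X)\in\{-\infty,\infty\}$). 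For $e_1+e_2$ I apply $\textrm{D}^{+}_{\wedge}$ and $\textrm{D}^{+}_{\vee}$ to both operands to move $+$ past their $\wedge$/$\vee$ structures; each resulting leaf is a sum of two previous leaf sums which recombines into the canonical sum form by associativity and commutativity of $+$.

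The main obstacle is the $\eqninf$-of-sum case: rule $\textrm{D}^{-\infty}_{+}$ creates lattice nodes above what were pure $\eqninf$ applications, so entangling this with the main distributivity pass threatens a looping recursion. Isolating it in a preliminary pass with its own termination measure---the size of the argument of $\eqninf$---is what keeps the two inductions clean. The only other minor subtlety is the fold $c\cdot\eqninf(X)=\eqninf(X)$, which is a constant-level computation relying on $c\cdot(\pm\infty)=\pm\infty$ for $c>0$ rather than on an explicit tabled law.
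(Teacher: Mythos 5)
Your proof is correct and follows essentially the same route as the paper's: a structural induction driven by the laws of Table~\ref{table:identities}, with $\eqninf$ applied to a sum singled out as the one delicate case --- the paper's $n$-ary identity $\eqninf(\sum_{i\in I} e_i)=(\bigwedge_{i\in I}\eqninf(e_i))\vee\bigvee_{i\in I}\eqinf(e_i)$ is exactly the iterated form of your use of $\textrm{D}^{-\infty}_{+}$. The only loose end is that $\textrm{D}^{-\infty}_{+}$ reintroduces $\eqinf$ after your first pass has eliminated it; a further application of $\eqinf(e')=e'+{-}\infty$ (turning $\eqinf(X)$ into the admissible term $1{\cdot}X+{-}\infty$) disposes of those occurrences, so the argument stands.
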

\begin{lemma}
\label{la:simpleNormalFormCond}
Expression of the forms $\cond{e_1}{e_2}{e_3}$ and $\conda{e_1}{e_2}{e_3}$ can be rewritten to
equivalent expressions where the first argument of such a conditional operator 
is a simple conjunctive or disjunctive normal form  using the equations in Table \ref{table:identities}.
\end{lemma}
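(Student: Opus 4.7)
My plan is to prove the stronger claim that every expression is equivalent, via the laws of Table~\ref{table:identities}, to one in conjunctive (dually, disjunctive) normal form; the statement of the lemma then follows immediately for expressions headed by a conditional operator. I will focus on the CNF case below, as the DNF case is entirely symmetric.

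The proof proceeds by structural induction on the expression $e$. Variables and constants are trivially simple CNFs. For a compound expression whose outermost operator is one of $+, \vee, \wedge, c{\cdot}, \eqinf, \eqninf$, the induction hypothesis supplies CNF forms for the sub-expressions, and I would then push the outer operator into any top-level conditional of the sub-expressions using the corresponding pull-up distributive laws from the second block of Table~\ref{table:identities} (i.e.\ $\textrm{D}^+_{\Rightarrow}$, $\textrm{D}^{\vee}_{\Rightarrow}$, $\textrm{D}^{\wedge}_{\Rightarrow}$, $\textrm{D}^c_{\Rightarrow}$, $\textrm{D}^{\infty}_{\Rightarrow}$, $\textrm{D}^{-\infty}_{\Rightarrow}$ and their $\rightarrow$-counterparts). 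Iterating this ultimately reduces the problem to combining simple CNFs at the leaves of the CNF tree, where Lemma~\ref{la:simpleNormalForm} finishes the job.

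For the conditional case $\cond{e_1}{e_2}{e_3}$ (the $\conda{\cdot}{\cdot}{\cdot}$-case being symmetric), I would first invoke the induction hypothesis to obtain CNF forms $\bar{e}_1, \bar{e}_2, \bar{e}_3$. If $\bar{e}_1$ is already simple CNF, then $\cond{\bar{e}_1}{\bar{e}_2}{\bar{e}_3}$ is a CNF by definition. Otherwise $\bar{e}_1$ has a top-level conditional whose first argument $f_1$ is itself simple CNF, and I would apply the appropriate composition law from Table~\ref{table:identities}. Concretely, $\textrm{D}^{\Rightarrow}_{\Rightarrow}$ rewrites $\cond{\cond{f_1}{f_2}{f_3}}{\bar{e}_2}{\bar{e}_3}$ into $\cond{(f_1\vee f_2)\wedge f_3}{\bar{e}_2}{\bar{e}_3}$, whose new first argument must be further normalised using the non-conditional pull-up rules mentioned above; whereas the mixed rule $\textrm{D}^{\Rightarrow}_{\rightarrow}$ for $\cond{\conda{f_1}{f_2}{f_3}}{\bar{e}_2}{\bar{e}_3}$ directly produces $f_1$ as the new first argument, already simple CNF. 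Rules $\textrm{D}^{\rightarrow}_{\rightarrow}$ and the companion $\textrm{D}^{\Rightarrow}_{\rightarrow}$ handle the outer-$\conda{\cdot}{\cdot}{\cdot}$ analogues in the same spirit.

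The main obstacle will be establishing termination of this recursive normalisation. The pull-up laws (e.g.\ $\textrm{D}^+_{\Rightarrow}$) duplicate sub-expressions, and $\textrm{D}^{\Rightarrow}_{\Rightarrow}$ re-introduces a $\vee/\wedge$-combination whose further normalisation may again encounter conditionals, so a naive counting of conditional operators does not decrease monotonically. I would address this by introducing a well-founded lexicographic measure on expressions---for instance the pair consisting of the maximum nesting depth of conditional operators followed by the total expression size---and then organise the rewriting as a bottom-up procedure on the CNF tree, showing along the way that each binary or unary operator construction on CNF operands yields a CNF of bounded depth and that the composition step for nested conditionals strictly reduces the measure.
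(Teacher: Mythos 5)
Your core mechanism is the one the paper uses: push the non-conditional operators of $e_1$ inwards with the distribution laws of Table~\ref{table:identities} so that a conditional surfaces at the top of the first argument, and then collapse the nested conditional with $\textrm{D}^{\Rightarrow}_{\Rightarrow}$, $\textrm{D}^{\rightarrow}_{\rightarrow}$ and the two mixed laws. The difference is organisational: the paper proves this lemma on its own, by induction on the number of conditional operators occurring in $e_1$, and only afterwards assembles Theorem~\ref{tm:normalform} from Lemmas~\ref{la:simpleNormalForm} and~\ref{la:simpleNormalFormCond}, whereas you fold everything into a single structural induction establishing the full normal-form claim and read the lemma off as a corollary. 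That reorganisation is legitimate, and your observation that the mixed law for $\cond{(\conda{f_1}{f_2}{f_3})}{\bar e_2}{\bar e_3}$ immediately yields a simple first argument (at the price of pushing fresh conditionals into the branches, which are smaller and can be handled recursively) is correct.

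The genuine gap is exactly where you place it, but your proposed repair does not close it. Consider the collapse step on a first argument $\cond{f_1}{f_2}{f_3}$ with $f_1$ simple and $f_2,f_3$ both non-simple normal forms. Renormalising the new first argument $(f_1\vee f_2)\wedge f_3$ by pushing $\vee$ and $\wedge$ through the conditionals grafts a copy of the conditional tree of $f_3$ onto every leaf of the tree of $f_2$; the result has conditional-nesting depth roughly $d(f_2)+d(f_3)$ and at least $\ell(f_2)\cdot i(f_3)$ conditional occurrences, both of which exceed $1+\max(d(f_2),d(f_3))$, respectively $1+i(f_2)+i(f_3)$, as soon as $f_2$ and $f_3$ each contain a conditional. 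So neither the raw count nor your lexicographic pair (maximal nesting depth, then size) decreases. Worse, on $e_1=\cond{a}{(\cond{b}{c}{d})}{(\cond{e}{f}{g})}$ with $a,\dots,g$ simple, one round of collapse-and-renormalise returns a first argument of exactly the same tree shape (one root with two conditional children and four simple leaves), and the procedure cycles indefinitely. The paper's own justification is admittedly terse on this very point---its ``one less conditional operator'' claim ignores the duplication caused by the pushing laws---but since your argument explicitly rests on exhibiting a well-founded measure, you must supply one that actually decreases. Any such argument has to exploit more than syntax, e.g.\ that the first argument of a conditional only matters up to its sign, so that the branches $f_2,f_3$ may first be replaced by sign-equivalent simple forms (obtained by structural induction on their own conditional trees) before the top-level collapse is performed; as it stands, the termination of your normalisation, and hence the proof, is open.
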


\begin{theorem}
\label{tm:normalform}
Each expression $e$ can be
rewritten to both a conjunctive and a disjunctive normal form using the equations in 
Table \ref{table:identities}.
\end{theorem}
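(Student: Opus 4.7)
The plan is to prove Theorem \ref{tm:normalform} by structural induction on the expression $e$, using Lemmas \ref{la:simpleNormalForm} and \ref{la:simpleNormalFormCond} as the base-layer machinery. The base cases $e = X$ and $e = d$ are immediate, since a singleton sum (with empty sets $\vars_{ij}$ and $\vars'_{ij}$) yields a simple conjunctive (and disjunctive) normal form.

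For the inductive step I would split on the top-level operator. When it is a conditional $\cond{e_1}{e_2}{e_3}$ or $\conda{e_1}{e_2}{e_3}$, the outer induction hypothesis puts $e_2$ and $e_3$ into CNF, and Lemma \ref{la:simpleNormalFormCond} then rewrites the whole expression so that its first argument is in simple CNF. This directly matches clause 2 of the definition of conjunctive normal form. When the top-level operator is one of $c\cdot\_$, $+$, $\wedge$, $\vee$, $\eqinf$, or $\eqninf$, I first apply the outer induction hypothesis to place every sub-expression into CNF and then run a secondary induction on the \emph{conditional nesting depth} of these normal-form sub-expressions. If every sub-expression has depth zero (i.e., is a simple CNF), the top operator is distributed inward using the non-conditional distribution laws of Table \ref{table:identities} — namely D$^{c}_{+}$, D$^{c}_{\wedge}$, D$^{c}_{\vee}$, D$^{+}_{\wedge}$, D$^{+}_{\vee}$, and the $\eqinf$/$\eqninf$ distribution laws — reproducing the manipulation performed inside the proof of Lemma \ref{la:simpleNormalForm}. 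If at least one sub-expression has a conditional at its top, the appropriate rule of the form D$^{\otimes}_{\Rightarrow}$ or D$^{\otimes}_{\rightarrow}$ (such as D$^{+}_{\Rightarrow}$, D$^{\wedge}_{\Rightarrow}$, D$^{\infty}_{\rightarrow}$, etc.) pushes the top operator into both branches, reducing the depth of that argument by one while the other arguments are unchanged. The secondary induction hypothesis then covers the two resulting sub-problems.

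The main obstacle is the secondary induction: pushing a non-conditional operator across a conditional duplicates the surrounding material on both branches, so a naive notion of expression size strictly increases. The termination argument must therefore be phrased carefully on the maximum conditional-nesting depth of the CNF arguments rather than on the overall complexity of the expression, and one has to check that after applying a distribution law the two new binary (or unary) problems indeed have strictly smaller depth. Once the conjunctive case is established, the disjunctive normal form is obtained by a dual argument, using D$^{\vee}_{\wedge}$, D$^{\wedge}_{\vee}$, and the symmetric interactions of $+$, $c\cdot\_$, $\eqinf$ and $\eqninf$ with $\vee$ and $\wedge$ from Table \ref{table:identities}.
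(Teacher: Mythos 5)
Your proposal follows essentially the same route as the paper: structural induction on $e$, with the base cases handled directly, the conditional case discharged by the induction hypothesis on $e_2,e_3$ together with Lemma \ref{la:simpleNormalFormCond} for the first argument, and the non-conditional operators handled by first normalising the arguments and then either invoking Lemma \ref{la:simpleNormalForm} (when all arguments are simple normal forms) or pushing the operator through an outer conditional via the D-laws. The paper states the latter step in one line (``the other term can be pushed inside the second and third argument, and using the induction hypothesis, these terms can be transformed to the required normal forms''), whereas you correctly observe that this step is not covered by the structural induction hypothesis and needs its own termination argument; making that secondary induction explicit is a genuine improvement in rigour. One refinement: the measure you propose --- the \emph{maximum} conditional-nesting depth of the normal-form arguments --- does not strictly decrease in general, since pushing $\wedge$, $\vee$ or $+$ through the conditional of one argument leaves the other argument untouched, and that other argument may already realise the maximum. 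Use instead the \emph{sum} of the conditional depths of the arguments (or the multiset of depths under the multiset order): distributing through $\cond{f_1}{f_2}{f_3}$ replaces an argument of depth $k$ by branches of depth at most $k-1$ in each of the two resulting subproblems, so the sum strictly decreases and the secondary induction goes through. With that adjustment your argument is complete and matches the paper's intent.
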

\section{Real equation systems and Gau\ss-elimination}
In this section we introduce Real Equation Systems (RESs) as sequences of fixed-point equations, introduce a natural equivalence
between RESs, and provide
a generic solution method, known as Gau\ss-elimination \cite{DBLP:conf/tacas/Mader95}.
\begin{definition}
Let $\vars$ be a set of variables. A \textit{Real Equation System} (\textit{RES}) $\cal E$ is a finite sequence of 
(fixed-point) equations  
\[
\sigma_1  X_1 {=} e_1,\ldots, \sigma_n  X_n {=} e_n
\]
where $\sigma_i$ is either the minimal fixed-point operator $\mu$ or the maximal fixed-point operator $\nu$,
$X_i \in \vars$ are variables and $e_i$ are expressions. 
We write $\bnd({\cal E})$ for the set of variables occurring in the left-hand side, i.e., 
$\bnd({\cal E})=\{X_1,\ldots,X_n\}$.
\end{definition}
The empty sequence of equations is denoted by $\varepsilon$.
 
The semantics of a real equation system is a \valuation giving 
the solutions of all variables, based on an initial \valuation $\eta$ giving the solution for all variables 
not bound in $\cal E$.

\begin{definition}
Let $\vars$ be a set of variables and ${\cal E}$ be a real equation system over $\vars$. 
The solution $\lsb {\cal E}\rsb\eta:\vars\rightarrow \hat{\Real}$ yields an extended real number for
all $X \in \vars$, given a \valuation 
$\eta:\vars\rightarrow \hat{\Real}$ of ${\cal E}$. It is
inductively defined as follows:
\[\begin{array}{l}
\lsb\varepsilon\rsb\eta=\eta,\\
\lsb \sigma X{=}e, {\cal E}\rsb\eta = \lsb {\cal E}\rsb (\eta[X:=\sigma(X,{\cal E},\eta,e)])
\end{array}\]
where $\sigma (X,{\cal E},\eta,e)$ is defined as
\[\begin{array}{l}
\mu(X,{\cal E},\eta,e) = \bigwedge \{r\in\hat{\Real}\mid r\geq \lsb{\cal E}\rsb (\eta[X:=r])(e)\}\textrm{ and}\\
\nu(X,{\cal E},\eta,e) = \bigvee   \{r\in\hat{\Real}\mid \lsb{\cal E}\rsb (\eta[X:=r])(e) \geq r\}.
\end{array}\]
\end{definition}
It is equivalent to write $=$ instead of $\geq$ in the above sets.
This makes the fixed-points easier to understand. 
Note that if the real equation system is closed, i.e., all variables in the
right-hand sides occur in $\bnd({\cal E})$, the value $\lsb {\cal E}\rsb\eta(X)$  is independent of $\eta$ for all $X \in \bnd({\cal E})$.

Following \cite{DBLP:journals/tcs/GrooteW05}, we introduce the notion of equivalency between equation systems.
We use the symbol $\equiv$ to distinguish this equivalence from `$=$' used in equation systems. 
\begin{definition}
\label{def:equiv}
Let ${\cal E}, {\cal E}'$ be real equation systems. We say that ${\cal E}\equiv{\cal E}'$ iff 
$\lsb {\cal E},{\cal F}\rsb\eta=\lsb {\cal E}',{\cal F}\rsb\eta$ for
all \valuations $\eta$ and real equation systems 
${\cal F}$ with $\bnd({\cal F})\cap (\bnd({\cal E})\cup \bnd({\cal E}'))=\emptyset$.
\end{definition}
In \cite{DBLP:journals/tcs/GrooteW05} it was observed that defining ${\cal E}\equiv{\cal E}'$ 
as $\lsb {\cal E}\rsb\eta=\lsb {\cal E}'\rsb\eta$ for all $\eta$ is not desirable, as the
resulting equivalence is not a congruence. With this alternative notion, we find that
$\mu X{=}Y$ and $\nu X{=}Y$ are equivalent. But $\mu X{=}Y, \nu Y{=}X$ and $\nu X{=}Y, \nu Y{=}X$ are not
as the first one has solution $X=Y=-\infty$ and the second one has $X=Y=\infty$. 

However, if the fixed-point symbol is the same, it is not necessary to take surrounding equations into
account. This is a pretty useful lemma which makes the proofs in this paper much easier, and of which
we are not aware that it occurs elsewhere in the literature.
\begin{lemma}
\label{la:nocontext}
Let $X$ be a variable, $e$ and $f$ be expressions and $\sigma$ either the minimal or the maximal fixed-point symbol.
If for any \valuation $\eta$ it holds that $\lsb \sigma X=e\rsb\eta=\lsb \sigma X=f\rsb\eta$ 
then $\sigma X=e\equiv\sigma X=f$.
\end{lemma}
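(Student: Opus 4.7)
My plan is to reduce the equivalence to a pointwise statement about fixed-point values, then exploit monotonicity together with a clever instantiation of the hypothesis. The hypothesis $\lsb\sigma X {=} e\rsb\eta = \lsb\sigma X {=} f\rsb\eta$ can be reformulated as $\sigma(X,\varepsilon,\eta,e) = \sigma(X,\varepsilon,\eta,f)$ for every valuation $\eta$, simply by unfolding the semantics of a single equation. Unfolding the semantics similarly shows $\lsb\sigma X {=} e,{\cal F}\rsb\eta = \lsb{\cal F}\rsb(\eta[X:=\sigma(X,{\cal F},\eta,e)])$, so to prove $\sigma X{=}e \equiv \sigma X{=}f$ it suffices to establish $\sigma(X,{\cal F},\eta,e) = \sigma(X,{\cal F},\eta,f)$ for every $\eta$ and every ${\cal F}$ with $X \notin \bnd({\cal F})$.

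I focus on $\sigma = \mu$; the $\nu$-case is symmetric after reversing inequalities. Let $\xi(r) := \lsb{\cal F}\rsb(\eta[X:=r])$. Since all operators in the syntax are monotonic, $\xi$ is monotonic in $r$; and since $X \notin \bnd({\cal F})$, we have $\xi(r)(X) = r$. Set $r_e := \mu(X,{\cal F},\eta,e)$ and $r_f := \mu(X,{\cal F},\eta,f)$. By Knaster--Tarski (applied to the monotonic map $r \mapsto \xi(r)(e)$ on the complete lattice $\hat{\Real}$), both infima are actually fixed-points, so $r_e = \xi(r_e)(e)$ and $r_f = \xi(r_f)(f)$.

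The key move is to apply the hypothesis not at $\eta$ but at the specific valuation $\eta^* := \xi(r_e)$, which ``freezes'' the variables bound in ${\cal F}$ at their solution when $X = r_e$. This yields a common value $s^* = \mu(X,\varepsilon,\eta^*,e) = \mu(X,\varepsilon,\eta^*,f)$. Because $\eta^*(X) = r_e$ and $\eta^*(e) = r_e$, we have $\eta^*[X:=r_e] = \eta^*$ and $\eta^*[X:=r_e](e) = r_e$, so $r_e$ is a prefixed-point of $s \mapsto \eta^*[X:=s](e)$, giving $s^* \leq r_e$. To get $r_f \leq s^*$, I would show $s^*$ is a prefixed-point of $r \mapsto \xi(r)(f)$. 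From $s^* \leq r_e$ and monotonicity of $\xi$, $\xi(s^*) \leq \xi(r_e) = \eta^*$ pointwise; moreover $\xi(s^*)$ and $\eta^*[X:=s^*]$ agree on $X$ (both assign $s^*$) and $\xi(s^*)$ equals $\eta$ on variables outside $\bnd({\cal F})\cup\{X\}$, so $\xi(s^*) \leq \eta^*[X:=s^*]$ pointwise. Monotonicity of $f$ then yields $\xi(s^*)(f) \leq \eta^*[X:=s^*](f) = s^*$, hence $r_f \leq s^* \leq r_e$. Swapping $e$ and $f$ gives the reverse inequality.

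The main obstacle is conceptual rather than technical: the hypothesis only controls fixed-points of $X$ in the ``flat'' context of an empty tail, yet we need a statement in an arbitrary context ${\cal F}$ where the values of other bound variables themselves depend on $r$ through $\xi$. The trick is that we have complete freedom in choosing the ambient valuation when instantiating the hypothesis, and selecting precisely the valuation $\xi(r_e)$ produced by ${\cal F}$'s own solution collapses the contextual fixed-point problem into a flat one to which the hypothesis directly applies.
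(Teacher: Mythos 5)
Your proof is correct and follows essentially the same route as the paper's: both arguments instantiate the ``flat'' hypothesis at the frozen valuation $\lsb{\cal F}\rsb(\eta[X:=r])$ obtained from the context at one of the contextual fixed points, and transfer the resulting bound back using monotonicity of $\lsb{\cal F}\rsb$ in its input valuation (a fact the paper isolates as a separate lemma proved by induction on ${\cal F}$, and which your appeal to ``all operators are monotonic'' implicitly relies on). The only cosmetic difference is that the paper assumes $m_f\leq m_e$ without loss of generality and freezes the context at the smaller solution, whereas you freeze at $r_e$, derive $r_f\leq s^*\leq r_e$ in one direction, and then symmetrize.
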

The proof of the main Theorem \ref{tm:main} is quite involved and heavily uses the following two lemmas,
which we only give for the minimal fixed-point. The formulations for the maximal fixed-point are dual. 
\begin{lemma}
\label{la:generic1}
Let $X\in \vars$ be a variable and $e,f$ be expressions. It holds that  $\mu X=e~\equiv~\mu X=f$ 
if for every \valuation $\eta$:
\begin{enumerate}
\item for the smallest $r\in\hat{\Real}$ such that $r=\eta[X:=r](e)$ it holds that
there is an $r'\in\hat{\Real}$ satisfying that $r'\leq r$ and $r'\geq\eta[X:=r'](f)$, and, vice versa, 
\item for the smallest $r\in\hat{\Real}$ such that $r=\eta[X:=r](f)$ it holds that
there is an $r'\in\hat{\Real}$ satisfying that $r'\leq r$ and $r'\geq\eta[X:=r'](e)$. 
\end{enumerate}
\end{lemma}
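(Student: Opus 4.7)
The plan is to reduce the statement, via Lemma \ref{la:nocontext}, to showing that for every valuation $\eta$ we have $\lsb \mu X{=}e \rsb\eta = \lsb \mu X{=}f \rsb\eta$. Since these two equation systems bind only $X$, it suffices to show that they agree on $X$, i.e.\ that $\mu(X,\varepsilon,\eta,e) = \mu(X,\varepsilon,\eta,f)$ for every $\eta$; on other variables both valuations just return $\eta$.

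Next I would unfold the definition of $\mu(X,\varepsilon,\eta,\cdot)$. By definition,
\[
\mu(X,\varepsilon,\eta,e) = \bigwedge \{r\in\hat{\Real}\mid r\geq \eta[X{:=}r](e)\},
\]
and by the remark following the semantic definition, this infimum coincides with the smallest $r\in\hat{\Real}$ such that $r = \eta[X{:=}r](e)$. (This is the Knaster--Tarski characterisation; it applies here because all operators in our syntax are monotonic in their arguments, so $r\mapsto \eta[X{:=}r](e)$ is a monotonic self-map of the complete lattice $\hat{\Real}$.) Let $r_e$ and $r_f$ denote these least fixed-points for $e$ and $f$ respectively.

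The heart of the argument is then a short post-fixed-point comparison. From condition~1 applied at $\eta$, there exists some $r'\leq r_e$ with $r'\geq \eta[X{:=}r'](f)$, i.e.\ $r'$ is a post-fixed-point of the map for $f$; hence $r'$ lies in the set over which the infimum defining $\mu(X,\varepsilon,\eta,f)$ is taken, so $r_f \leq r' \leq r_e$. Condition~2 yields the symmetric inequality $r_e\leq r_f$, and we conclude $r_e=r_f$, as required.

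The only potentially delicate step is the justification that the infimum of post-fixed-points is attained and equals the least fixed-point of $r\mapsto \eta[X{:=}r](e)$; this is what licenses the switch between the infimum in the definition of $\mu$ and the ``smallest $r$ with $r=\eta[X{:=}r](e)$'' that appears in the hypotheses. However, monotonicity of the operators together with Tarski's fixed-point theorem makes this routine, and the paper already flags the equivalence explicitly. Once that is in place, the proof is essentially just the two post-fixed-point inequalities above, with the final appeal to Lemma~\ref{la:nocontext} to promote equality of single-equation solutions to the congruence $\equiv$.
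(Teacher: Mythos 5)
Your proposal is correct and follows essentially the same route as the paper's proof: both reduce to equality of the single-equation solutions via Lemma~\ref{la:nocontext}, identify the least fixed-point with the infimum of the post-fixed-point set $\{r\mid r\geq\eta[X{:=}r](\cdot)\}$, and then use the hypothesised $r'$ as a post-fixed-point witness to obtain the two inequalities $r_f\leq r_e$ and $r_e\leq r_f$. Your explicit justification of the Knaster--Tarski step is a welcome addition, but the argument is otherwise the same.
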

\begin{lemma}
\label{la:generic2}
If $\mu X=e~\equiv~\mu X=f$, then for any \valuation $\eta$ it holds that
\begin{enumerate}
\item
for any $r\in \hat{\Real}$ such that $r\geq \eta[X:=r](e)$, there is an $r'\in\hat{\Real}$ such
that $r'\leq r$ and $r'= \eta[X:=r'](f)$, and, vice versa, 
\item
for any $r\in \hat{\Real}$ such that $r\geq \eta[X:=r](f)$, there is an $r'\in\hat{\Real}$ such
that $r'\leq r$ and $r'= \eta[X:=r'](e)$.
\end{enumerate} 
\end{lemma}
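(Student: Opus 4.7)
The strategy is to convert the equivalence $\mu X = e \equiv \mu X = f$ into equality of least fixed points and then exploit Tarski's theorem. The argument is symmetric, so it suffices to treat claim~1.

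First, I instantiate Definition~\ref{def:equiv} with ${\cal F} = \varepsilon$, which yields $\lsb \mu X{=}e\rsb\eta = \lsb \mu X{=}f\rsb\eta$ for every valuation~$\eta$. Evaluating the left-hand sides at~$X$ and unfolding the semantic clause for~$\mu$, this says
\[
\rho \;:=\; \bigwedge \{s\in\hat{\Real}\mid s\geq\eta[X{:=}s](e)\} \;=\; \bigwedge \{s\in\hat{\Real}\mid s\geq\eta[X{:=}s](f)\}.
\]

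Second, I invoke Tarski's fixed-point theorem. Because every syntactic operator of the expression language is monotone on~$\hat{\Real}$, the maps $g_e,g_f:\hat{\Real}\to\hat{\Real}$ given by $g_e(s)=\eta[X{:=}s](e)$ and $g_f(s)=\eta[X{:=}s](f)$ are monotone functions on the complete lattice $(\hat{\Real},\leq)$. Hence the infima of their pre-fixed point sets are actual least fixed points; in particular $\rho = g_e(\rho) = g_f(\rho)$. This is exactly the observation the paper signals just after the semantics, where it is remarked that one may write $=$ instead of $\geq$ in the defining set.

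Third, I set $r' := \rho$. The hypothesis $r \geq \eta[X{:=}r](e)$ means $r$ lies in the set over which $\rho$ is an infimum, so $r' = \rho \leq r$. Moreover $r' = \rho = g_f(\rho) = \eta[X{:=}r'](f)$, establishing the required equality. The reverse direction (claim~2) is identical with the roles of $e$ and $f$ exchanged.

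The only nontrivial step is the passage from ``infimum of pre-fixed points'' to ``actual fixed point''. This is completely standard once monotonicity of all operators is in place, so I do not expect real difficulty; the main care required is the bookkeeping of unfolding $\lsb\cdot\rsb$ correctly and appealing to Definition~\ref{def:equiv} with the empty continuation~${\cal F}$ rather than via Lemma~\ref{la:nocontext} (which goes the other direction and is not needed here).
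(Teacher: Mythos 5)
Your proof is correct and follows essentially the same route as the paper's: instantiate the equivalence with the empty continuation to obtain $\bigwedge S_e=\bigwedge S_f$, note that any pre-fixed point $r$ of $e$ dominates this infimum, and take $r'$ to be that common infimum, which is an actual fixed point of $f$ by Knaster--Tarski. You are merely slightly more explicit than the paper about why the infimum of pre-fixed points is itself a fixed point.
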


The notion of equivalence of Definition \ref{def:equiv} is an equivalence relation on RESs and it satisfies the
properties E1-E7 in Table \ref{tab:prop}. E1-E5 are proven for boolean equation systems in \cite{DBLP:journals/tcs/GrooteW05} and the proofs carry over to our setting. 
The proofs of E6 and E7 are given in Appendix \ref{app:proofs}. 
In the table, $\sigma$ and $\sigma'$ stand for the fixed-point symbols $\mu$ and $\nu$. 
\begin{table}
\begin{center}
\[
\begin{array}{llll}
\textrm{E1}&\displaystyle\frac{{\cal E}\equiv{\cal E'}}{{\cal F},{\cal E}~\equiv~{\cal F},{\cal E}'}.\hspace*{3cm}&
\textrm{E2}&\displaystyle\frac{{\cal E}\equiv{\cal E'}}{{\cal E},{\cal F}~\equiv~{\cal E}',{\cal F}}.\\
\\
\textrm{E3}&\multicolumn{3}{l}{\sigma X{=}e, {\cal E}, \sigma'Y{=}e'~\equiv~ \sigma X{=}e[Y:=e'], {\cal E}, \sigma'Y{=}e'\quad\textrm{ if } X,Y\not\in \bnd({\cal E}).}\\
\\
\textrm{E4}&\multicolumn{3}{l}{\sigma X{=}e, {\cal E}~ \equiv~ {\cal E}, \sigma X{=}e\quad\textrm{ if }\occ(e)=\emptyset\textrm{ and }X\not\in\bnd({\cal E}).}\\
\\
\textrm{E5}&\multicolumn{3}{l}{\sigma X{=}e, \sigma Y{=}e'~\equiv~ \sigma Y{=}e', \sigma X{=}e.}\\
\\
\textrm{E6}&\multicolumn{3}{l}{\displaystyle\frac{\mu X=e_1~\equiv~\mu X=f_1\textrm{ and }\mu X=e_2~\equiv~\mu 
X=f_2}{\mu X=e_1\wedge e_2~\equiv~\mu X=f_1\wedge f_2}.}\\
\\
\textrm{E7}&\multicolumn{3}{l}{\displaystyle\frac{\nu X=e_1~\equiv~\nu X=f_1\textrm{ and }
\nu X=e_2~\equiv~\nu X=f_2}{\nu X=e_1\vee e_2~\equiv~\nu X=f_1\vee f_2}.}
\end{array}
\]
\end{center}
\caption{Properties of the equivalence $\equiv$ on RESs}
\label{tab:prop}
\end{table}
The equivalences E3 and E4 above give a method to solve arbitrary equation systems, provided a 
single equation can be solved. Here, solving a single equation $\sigma X{=}e$ means replacing it by an
equivalent equation $\sigma X{=}e'$ where $X$ does not occur in $e'$, which is the topic of the
next section. This method is known as Gau\ss-elimination as it resembles the well-known Gau\ss-elimination 
procedure for sets of linear equations \cite{DBLP:conf/tacas/Mader95}. 

The idea behind Gau\ss-elimination for a real equation system ${\cal E}$ is as follows. First, the last
equation $\sigma_n X_n{=}e_n$ of ${\cal E}$ is solved for $X_n$. 
Assume the solution is $\sigma_n X_n{=}e_n'$, where $X_n$ does not occur in $e_n'$. Using E3 the expression $e_n'$ is
substituted for all occurrences $X_n$ in right-hand sides of ${\cal E}$ removing all occurrences of $X_n$ except
in the left hand side of the last equation. Subsequently, this process is repeated for the one but
last equation of ${\cal E}$ up to the first equation. Now the first equation has the shape $X_1{=}e_1$ where
no variable $X_1$ up till $X_n$ occurs in $e_1$. Using E4 this equation can be moved to the end of $\cal E$, and
by applying E3 all occurrences of $X_1$ are removed from the right-hand sides of ${\cal E}$. This is then repeated
for $X_2$, which now also does not contain $X_1,\ldots,X_n$, until all variables $X_1,\ldots,X_n$ have been removed
from all right-hand sides of ${\cal E}$.

A concrete, but simple example is the following. Consider the real equation system \[\mu X{=}Y, ~~\nu Y{=}(X+1)\wedge Y.\] 
We can derive:
\[\begin{array}{l}
\mu X{=}Y,~ \nu Y{=}(X+1)\wedge Y~\stackrel{(\dagger)}{\equiv}~
\mu X{=}Y,~ \nu Y{=}X+1~\stackrel{\textrm{E3}}{\equiv}~
\mu X{=}X+1,~ \nu Y{=}X+1~\stackrel{(\ddag)}{\equiv}~\\
\mu X{=}-\infty,~ \nu Y{=}X+1~\stackrel{\textrm{E4}}\equiv~
\nu Y{=}X+1,~ \mu X{=}-\infty, ~\stackrel{\textrm{E3}}\equiv~
\nu Y{=}-\infty,~ \mu X{=}-\infty.
\end{array}\]
Solving the equation $\nu Y=(X+1)\wedge Y$ at $(\dagger)$ above, and $\mu X{=}X+1$ at $(\ddag)$ can be done
with simple fixed-point iteration. In $\nu Y=(X+1)\wedge Y$ fixed-pointed iteration starts with $Y=\infty$. 
This yields in the first iteration $Y=X+1$, and this iteration is stable, and hence it is the maximal fixed-point solution.
For $\mu X{=}X+1$, the initial approximation $X=-\infty$ is also a solution, and hence the minimal solution.
Unfortunately, fixed-point iteration does not terminate always. For instance, $\mu X{=}(X+1)\vee 0$ has minimal
solution $X=\infty$, which can only be obtained via an infinite number of iteration steps.

\section{Solving single equations}
\label{sec:singleequation}
In this section we show that it is possible to solve each fixed-point equation $\sigma X=e$ in a finite number of steps. 
First assume that $e$ does not contain conditional operators. If we have a minimal fixed-point equation $\mu X{=}e$,
we know via Theorem \ref{tm:normalform} that we can rewrite $e$ to simple conjunctive normal form. 
We want to explicitly expose occurrences of the variable $X$ in the normal form of $e$ and do this by denoting 
the normal form of $e$ as shown in (\ref{eq:conjunctiverhs}). 
Here, all expressions containing variables different from $X$ are moved to $f_{ij}$ or $m_i$. 
\begin{equation}
\label{eq:conjunctiverhs}
\bigwedge_{i\in I}(\bigvee_{j\in J_i}(c_{ij}{\cdot} X + c'_{ij}{\cdot}\eqninf(X) + f_{ij})\vee m_i).
\end{equation}
The expressions $f_{ij}$ and $m_i$ do not contain $X$. 
Subexpressions $c_{ij}{\cdot} X$ are optional, 
i.e., abusing notation, we allow $c_{ij}$ to be $0$ if this sub-term
is not present. Likewise, $\eqninf(X)$ is optional and therefore, $c'_{ij}$ is either $0$ or $1$, where
$0$ means that the expression is not present. 
Constants $c_{ij}$ and $c'_{ij}$ cannot both be $0$, as in that case the
conjunct does not contain $X$ and is hence part of $m_i$.

We define the solution of $\mu X{=}e$, in which $e$ is assumed to be of shape (\ref{eq:conjunctiverhs}), as $\mu X=\Sol^\mu_{X=e}$ where:
\begin{equation}
\label{eq:minimal_solution}
\begin{array}{l}
\displaystyle \Sol_{X{=}e}^\mu = \bigwedge_{i\in I} (\cond{(\eqinf(\bigvee_{j\in J_i}f_{ij}))\\
\displaystyle\hspace*{2.0cm}}
{(\cond{\eqninf(m_i)}{-\infty}{
(\cond{(
\hspace*{-0.5cm}\bigvee_{j\in J_i\mid c_{ij}\geq 1}\hspace*{-0.5cm}f_{ij}+(c_{ij}-1){\cdot} U_i)\vee
\hspace*{-0.5cm}\bigvee_{j\in J_i\mid c'_{ij}=1}\hspace*{-0.5cm}\infty}{U_i}{\infty}))
}\\
\hspace*{2.0cm}\,}{~~~\,\infty})
\end{array}
\end{equation}
where $\displaystyle U_i=m_i\vee\bigvee_{j\in J_i\mid c_{ij}<1}\frac{1}{1-c_{ij}}{\cdot} f_{ij}$.

Note that we use the notation $\bigvee_{j\in J_i\mid \mathit{cond}}$ where $\mathit{cond}$ is a condition.
This means that the disjunction is only taken over elements $j$ that satisfy the condition. Also observe
that we use expressions such as $\frac{1}{1-c_{ij}}{\cdot}f_{ij}$. This is an ordinary multiplication
with $\frac{1}{1-c_{ij}}$ as positive constant. It is worth noting that if only rational numbers are used
in the equations, the solutions to the variables are restricted to $-\infty$, $\infty$ and rationals. 

It can be understood that (\ref{eq:minimal_solution}) is a solution of (\ref{eq:conjunctiverhs}) as follows.
First observe that due to property \textrm{E6} the solution of a minimal fixed-point 
distributes over the initial conjunction $\bigwedge_{i\in I}$ of clauses. This
means that we can fix some $i\in I$ and only concentrate on understanding how one single clause 
$\bigvee_{j\in J_i}(c_{ij}{\cdot} X + c'_{ij}{\cdot}\eqninf(X) + f_{ij})\vee m_i$ must be solved. 
If $f_{ij}$ is equal to $\infty$ for some $j\in J_i$, 
the solution must be infinite. This is ensured by the outermost conditional
operator in (\ref{eq:minimal_solution}). Now, assuming that no $f_{ij}$ is equal to $\infty$, we inspect $m_i$. If $m_i$ equals $-\infty$,
then the minimal solution for the given $i\in I$ is also $-\infty$. This explains the nested conditional operator in 
(\ref{eq:minimal_solution}). 

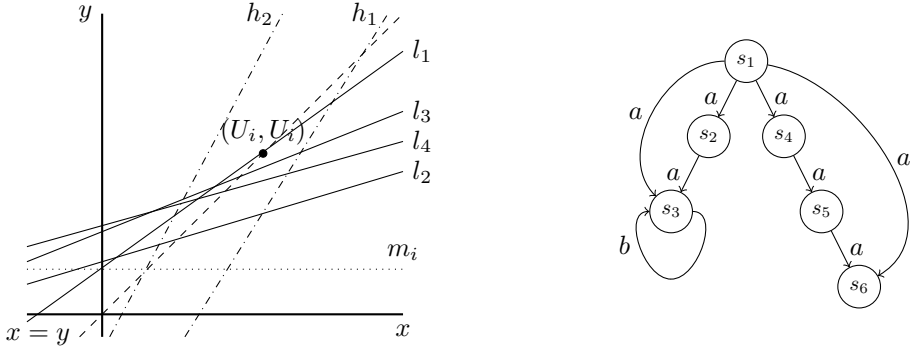
\begin{figure}[t]
\begin{center}
\begin{minipage}{0.4\linewidth}
\begin{tikzpicture}
\draw [thick] (-1,0) -- (4,0) node[below] (x) {$x$};   
\draw [thick] (0,-0.3) -- (0,4) node[left] (y) {$y$};    
\draw [dashed] (-0.3,-0.3) node[left] (xy) {$x=y$} -- (4,4);               
\draw [dotted] (-1,0.6) -- (4,0.6) node[above] (mi) {$m_i$};    
\draw (-1,-0.1) -- (4,3.5) node[right] (l1) {$l_1$}; 
\draw [fill=black] (2.14,2.14) circle(0.05)  node[above] (Ui) {$(U_i,U_i)$};  
\draw (-1,0.4) -- (4,1.9) node[right] (l2) {$l_2$}; 
\draw (-1,0.7) -- (4,2.7) node[right] (l3) {$l_3$}; 
\draw (-1,0.9) -- (4,2.3) node[right] (l4) {$l_4$}; 
\draw [dash dot] (1.1,-0.3) -- (3.8,4) node[left] (h1) {$h_1$}; 
\draw [dash dot](0.1,-0.3) -- (2.4,4) node[left] (h2) {$h_2$}; 
\end{tikzpicture}
\end{minipage}
\hspace{2cm}
\begin{minipage}{0.4\linewidth}
\begin{tikzpicture}
\draw (0,3) node[shape=circle, scale=0.8, draw] (s1) {$s_1$};
\draw (-0.5,2) node[shape=circle, scale=0.8, draw] (s2) {$s_2$};
\draw (-1,1) node[shape=circle, scale=0.8, draw] (s3) {$s_3$};
\draw (-1,0.1) node[shape=circle, scale=0.005, draw] (s3fake) {};
\draw (0.5,2) node[shape=circle, scale=0.8, draw] (s4) {$s_4$};
\draw (1,1) node[shape=circle, scale=0.8, draw] (s5) {$s_5$};
\draw (1.5,0) node[shape=circle, scale=0.8, draw] (s6) {$s_6$};
\draw[->] (s1) -- node [left] {$a$} (s2);
\draw[->] (s2) -- node [left] {$a$} (s3);
\draw[->] (s1) to[out=180, in=135] node [left] {$a$} (s3);
\draw (s3) to[out=0, in=0]  (s3fake);
\draw[->] (s3fake) to[out=180, in=180] node [left] {$b$} (s3);
\draw[->] (s1) -- node [right] {$a$} (s4);
\draw[->] (s4) -- node [right] {$a$} (s5);
\draw[->] (s5) -- node [right] {$a$} (s6);
\draw[->] (s1) to[out=-10,in=30] node [right] {$a$} (s6);
\end{tikzpicture}
\end{minipage}

\end{center}
\caption{Solving a simple minimal fixed-point equation/An LTS with an infinite sequence of $b$'s}
\label{fig:explanation}
\end{figure}

Next consider the innermost conditional operator of (\ref{eq:minimal_solution}) and additionally assume $m_i>-\infty$. If there is some
$c'_{ij}$ that is equal to $1$, then the minimal solution is at least $m_i$ due to the disjunct $m_i$ that appears in the clause. But then it must also be at least $1{\cdot}\eqninf(m_i)=\infty$. Hence, in this case the solution is $\infty$, 
which is ensured by the expression in the condition of the innermost conditional $\bigvee_{j\in J_i\mid c'_{ij}=1}\infty$.
Otherwise, all $c'_{ij}$ equal $0$, and both the right-hand side of (\ref{eq:conjunctiverhs}) and the solution
(\ref{eq:minimal_solution}) can be simplified to 
\[\bigvee_{j\in J_i}(c_{ij}{\cdot} X + f_{ij})\vee m_i\textrm{~~~~~~and~~~~~~}\cond{(
\hspace*{-0.5cm}\bigvee_{j\in J_i\mid c_{ij}\geq 1}\hspace*{-0.5cm}f_{ij}+(c_{ij}-1){\cdot} U_i)}{U_i}{\infty}.\]
This resulting situation is best explained using Figure \ref{fig:explanation} (left). The simple conjunctive normal form consists
of a number of disjunctions of the shape $c_{ij}{\cdot} X+f_{ij}$. These characterise lines of which we are interested in their
intersection with the line $x=y$. In Figure \ref{fig:explanation} such lines are drawn as $l_1,\ldots,l_4$, and $h_1$ and $h_2$.
Due to the disjunction, we are interested in the maximal intersection point. If we first concentrate on those lines with $c_{ij}<1$,
then we see that $(U_i,U_i)$ is the maximal intersection point of these lines above $m_i$. This intersection point is the solution for the equation
unless there is a steep line, with $c_{ij}\geq 1$ which at $x=U_i$ lies above $(U_i,U_i)$. In the figure there is such a line, \emph{viz.}\ $h_2$.
In such a case the fixed-point lies at the intersection of $h_2$ with the line $x=y$ for $x>U_i$. 
As this point does not exist in $\Real$, 
the solution is $\infty$. The expression $\bigvee_{j\in J_i\mid c_{ij}\geq 1}f_{ij}+(c_{ij}-1){\cdot} U_i$ 
in (\ref{eq:minimal_solution}) takes care of this situation. Steep lines, like $h_1$ which lie below $(U_i,U_i)$ at $x=U_i$ can be ignored,
as they do not force the minimal fixed-point $U_i$ to become larger. 

In case of a maximal fixed-point equation, $\nu X{=}e$ where
$e$ is a  simple disjunctive normal form, it is useful to again expose the occurrences of $X$. 
We can denote the normal form of $e$ in the following way:
\begin{equation}
\label{eq:disjunctive}
\bigvee_{i\in I}(\bigwedge_{j\in J_i}(c_{ij}{\cdot} X + c'_{ij}{\cdot}\eqninf(X) + f_{ij})\wedge m_i)
\end{equation}
where $c_{ij}{\cdot} X$ and $\eqninf(X)$ are optional, i.e., $c_{ij}$ can be $0$, and $c'_{ij}$ is either $0$ or $1$, 
where $0$ means that the expression is not present. One of $c_{ij}$ and $c'_{ij}$ is not equal to $0$.
Again, the expressions $f_{ij}$ and $m_i$ do not contain $X$. 

The solution of $\nu X{=}e$, where $e$ is of the shape (\ref{eq:disjunctive}), is $\nu X=\Sol^\nu_{X=e}$ with
\begin{equation}
\label{eq:maximal_solution}
\begin{array}{l}
\displaystyle \Sol_{X{=}e}^\nu = \bigvee_{i\in I} (\cond{\eqinf(m_i)\\\displaystyle\hspace*{2cm}}
{\conda{(\bigwedge_{j\in J_i\mid c_{ij}\geq 1\wedge c'_{ij}=0}(f_{ij}+(c_{ij}-1)){\cdot} U_i)}{-\infty}{U_i}
\\\hspace*{2cm}\,}{~~~\,\infty})
\end{array}
\end{equation}
where $\displaystyle U_i=m_i\wedge\bigwedge_{j\in J_i\mid c_{ij}<1\wedge c'_{ij}=0}\frac{1}{1-c_{ij}}{\cdot} f_{ij}$.
\\
The two fixed-point solutions are not syntactically dual which is due to the fact that simple 
conjunctive and disjunctive
normal forms are not each other's dual, because of the presence of $+$ and $\eqninf$. We refrain from sketching the intuition underlying the solution to the maximal fixed-point as it is similar to that of the minimal fixed-point.


A full normal form can contain the conditional operators $\cond{e_1}{e_2}{e_3}$ and $\conda{e_1}{e_2}{e_3}$. 
Suppose we have an equation
$\sigma X = \cond{e_1}{e_2}{e_3}$
with $\sigma$ either $\mu$ or $\nu$.
For the minimal fixed-point
the right-hand side of the solution is
$\displaystyle \Sol^{\mu}_{X{=}\cond{e_1}{e_2}{e_3}}=
\cond{(e_1[X:=\Sol^{\mu}_{X=e_2}\wedge \Sol^{\mu}_{X=e_3}])}{\Sol^{\mu}_{X=e_2}}{\Sol^{\mu}_{X=e_3}}$.
For the maximal fixed-point we find the right-hand side
$\displaystyle \Sol^{\nu}_{X{=}\cond{e_1}{e_2}{e_3}}=
\cond{(e_1[X:=\Sol^{\nu}_{X=e_3}])}{\Sol^{\nu}_{X=e_2\wedge e_3}}{\Sol^{\nu}_{X=e_3}}$.

In case of the other conditional operator
$\sigma X = \conda{e_1}{e_2}{e_3}$ we obtain
for the right side of the minimal fixed-point
$\displaystyle \Sol^{\mu}_{X{=}\conda{e_1}{e_2}{e_3}}=
\conda{(e_1[X:=\Sol^{\mu}_{X=e_2}])}{\Sol^{\mu}_{X=e_2}}{\Sol^{\mu}_{X=e_2\vee e_3}}$, and 
for the right side of the maximal fixed-point 
$\displaystyle \Sol^{\nu}_{X{=}\conda{e_1}{e_2}{e_3}}=
\conda{(e_1[X:=\Sol^{\nu}_{X=e_2}\vee \Sol^{\nu}_{X=e_3}])}{\Sol^{\nu}_{X=e_2}}{\Sol^{\nu}_{X=e_3}}$.

The following theorem summarises that these solutions solve fixed-point equations.
\begin{theorem}
\label{tm:main}
For any fixed-point symbol $\sigma$, variable $X\in\vars$ and expression $e$, it holds that
\[\sigma X=e~\equiv~\sigma X=\Sol_{X=e}^\sigma\]
and $X \notin \occ(\Sol_{X=e}^\sigma)$, where $\Sol_{X=e}^\sigma$ is defined above. 
\end{theorem}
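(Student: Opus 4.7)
The plan is to begin by invoking Theorem \ref{tm:normalform} to rewrite $e$ into conjunctive normal form (when $\sigma = \mu$) or disjunctive normal form (when $\sigma = \nu$), which is an equivalence-preserving transformation. Since such a normal form is either \emph{simple}, or has the outer shape $\cond{e_1}{e_2}{e_3}$ or $\conda{e_1}{e_2}{e_3}$ with $e_2,e_3$ themselves in normal form, I would proceed by structural induction on the normal form. The side condition $X \notin \occ(\Sol^\sigma_{X=e})$ is immediate by inspecting the defining equations (\ref{eq:minimal_solution}) and (\ref{eq:maximal_solution}) together with the inductive hypothesis on the subexpressions in the conditional cases.

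For the base case of $\mu X = e$ with $e$ in simple conjunctive normal form (\ref{eq:conjunctiverhs}), I would first apply property E6 to push the fixed-point over the outer conjunction $\bigwedge_{i \in I}$, reducing the problem to a single-clause case of the form $\mu X = \bigvee_{j \in J_i}(c_{ij}{\cdot} X + c'_{ij}{\cdot}\eqninf(X) + f_{ij}) \vee m_i$. The verification then splits along the nested conditionals of (\ref{eq:minimal_solution}): (i) if some $f_{ij} = \infty$, any pre-fixed-point $r$ is forced to $\infty$; (ii) if all $f_{ij} < \infty$ and $m_i = -\infty$, then $-\infty$ is itself a pre-fixed-point; (iii) if $m_i > -\infty$ and some $c'_{ij} = 1$, then at any finite $r$ the clause is at least $\eqninf(r) = \infty$, forcing the minimal solution to $\infty$; (iv) otherwise the clause is a maximum of affine functions of $X$ guarded below by $m_i$, and the geometric analysis around Figure \ref{fig:explanation} identifies the minimal fixed-point as $U_i$ unless a steep line with slope $\geq 1$ overshoots $U_i$, in which case it equals $\infty$. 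Lemma \ref{la:generic1} is used to show that the value given by (\ref{eq:minimal_solution}) furnishes the required witness $r'$, while Lemma \ref{la:generic2} together with Lemma \ref{la:nocontext} supplies the reverse bound. The maximal fixed-point case is handled dually using E7 with (\ref{eq:disjunctive}) and (\ref{eq:maximal_solution}).

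For the inductive case, consider $\mu X = \cond{e_1}{e_2}{e_3}$. At any fixed-point the right-hand side collapses to $e_2 \wedge e_3$ (when the guard is $\leq 0$ there) or to $e_3$ (when the guard is $> 0$). By the inductive hypothesis applied to $e_2$ and $e_3$, combined with property E6, one obtains $\mu X = e_2 \wedge e_3 \equiv \mu X = \Sol^\mu_{X=e_2} \wedge \Sol^\mu_{X=e_3}$, whose right-hand side is $X$-free. So the two candidate minimal fixed-points are $\Sol^\mu_{X=e_2} \wedge \Sol^\mu_{X=e_3}$ and $\Sol^\mu_{X=e_3}$, the former being the smaller. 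Evaluating $e_1$ at this smaller candidate selects the active branch and yields precisely the expression $\cond{(e_1[X := \Sol^\mu_{X=e_2} \wedge \Sol^\mu_{X=e_3}])}{\Sol^\mu_{X=e_2}}{\Sol^\mu_{X=e_3}}$, matching the definition of $\Sol^\mu_{X=\cond{e_1}{e_2}{e_3}}$. The $\conda{\_}{\_}{\_}$ case and the maximal fixed-point cases proceed analogously.

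The main obstacle will be the base-step case analysis: formula (\ref{eq:minimal_solution}) conflates qualitatively distinct regimes (finite affine solution, forced $\infty$ via slope $\geq 1$ or via $\eqninf$, forced $-\infty$ via $m_i$, and immediate $\infty$ via $f_{ij}$), and each regime must be separately matched against the definition of $\mu(X, {\cal E}, \eta, e)$ through Lemmas \ref{la:generic1} and \ref{la:generic2}. A secondary delicate point in the conditional case is justifying that evaluating the guard only at $\Sol^\mu_{X=e_2} \wedge \Sol^\mu_{X=e_3}$ suffices: this relies on the monotonicity of $e_1$ in $X$ together with the observation that when the guard is positive at the smaller candidate it remains positive at the larger candidate (so the ``$e_3$'' branch is the only fixed-point), while if the guard is non-positive at the smaller candidate then the smaller candidate is already a valid fixed-point of the full conditional.
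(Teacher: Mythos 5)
Your proposal follows essentially the same route as the paper's proof: rewrite to normal form via Theorem \ref{tm:normalform}, induct on the conditional structure, use E6/E7 to reduce to single clauses, and discharge each regime of the nested conditionals in (\ref{eq:minimal_solution}) and (\ref{eq:maximal_solution}) through Lemmas \ref{la:generic1} and \ref{la:generic2}. The only cosmetic difference is that in the conditional case you argue directly about which candidate is the minimal fixed-point using monotonicity of the guard, whereas the paper uniformly constructs the pre-/post-fixed-point witnesses demanded by Lemmas \ref{la:generic1} and \ref{la:generic2}; both amount to the same case analysis.
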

\section{Relation to boolean equation systems}
A boolean equation system (BES) is a restricted form of a real equation system where solutions can only be $\true$ 
or $\false$ \cite{DBLP:conf/tacas/Mader95}. Concretely, the syntax for expressions is
\[e ~::=~ X\mid \true\mid\false\mid e\vee e\mid e\wedge e\]
where $X$ is taken from some set $\vars$ of variables \cite{DBLP:conf/tacas/Mader95}. A boolean equation system
is a sequence of fixed-point equations $\sigma_1 X_1=e_1,\ldots,\sigma_n X_n=e_n$ where $\sigma_i$ are fixed-point operators, $X_i$ are variables from $\vars$ ranging over $\true$ and $\false$, 
and $e_i$ are boolean expressions. 


We do not spell out
the semantics of boolean equation systems, as it is similar to that of RESs. However, we believe that it is useful to
indicate the relation with real equation systems.

The simplest embedding is where a given BES is literally transformed to a RES and $\true$ and $\false$ 
are interpreted as $\infty$ and $-\infty$. We consider a minimal fixed-point equation. The right-hand side can be rewritten 
to a simple conjunctive normal form. 
We write this in the shape of equation (\ref{eq:conjunctiverhs}). So, $c_{ij}=1$, $c'_{ij}=0$,
$f_{ij}$ is absent and $m_i$ does not contain $X$ and can only be interpreted as $\pm\infty$. 
Exactly if $J_i$ is not empty, $X$ is present in conjunct $i$. 
\[
\mu X=\bigwedge_{i\in I}((\bigvee_{j\in J_i} X) \vee m_i).
\]
The solution is given by 
equation (\ref{eq:minimal_solution}), which can be simplified to:
\[\bigwedge_{i\in I} (\cond{\eqninf(m_i)}{-\infty}{\cond{((\bigvee_{j\in J_i} 0)}{m_i}{\infty}}))= 
\bigwedge_{i\in I} m_i=
\bigwedge_{i\in I} ((\bigvee_{j\in J_i}-\infty)\vee m_i).\]
The latter exactly coincides with the Gau\ss-elimination rule for BESs that says that in an equation $\mu X=e$, any
occurrence of $X$ in $e$ can safely be replaced by $\false$. For the maximal fixed-point operator, dual reasoning
applies. As Gau\ss-elimination is a complete way to solve a BES with $\true$ and $\false$, and exactly the
same reduction works with the corresponding RES with $\infty$ and $-\infty$, this confirms that this interpretation works.

An alternative interpretation is given by taking two arbitrary constants $c_{\true}$ and $c_{\false}$ with 
as only constraint that $c_{\true}>c_{\false}$. 
%
A boolean equation system $\sigma_1 X_1=e_1,\ldots,\sigma_n X_n=e_n$ is translated into
$\sigma_1 X_1=c_{\false}\vee(c_{\true}\wedge e_1),\ldots,\sigma_n X_n=c_{\false}\vee (c_{\true}\wedge e_n)$
of which the validity can be established in the same way as above. 

\section{Quantitative modal formulas and their translation to RESs}
We can write quantitative modal formulas that
yield a value instead of true and false. In the next section we 
provide examples of what can be expressed. 
Our formulas have the syntax 
\[\phi~::=~ X\mid 
            d\mid c{\cdot}\phi\mid \phi+\phi
           \mid \phi \vee\phi \mid \phi \wedge\phi \mid \langle a\rangle\phi\mid [a]\phi\mid\mu X.\phi\mid\nu X.\phi. \]
Here $d\in \hat{\Real}$ and $c\in\Real$ with $c>0$ are constants, $X\in \vars$
is a variable, and $a\in \Act$ is an action from some set of actions $\Act$. 
Although there are many similar logics around, we have not encountered this exact form before. 

We evaluate these modal formulas on probabilistic LTSs.
For a finite set of states $S$, we use distributions $d:S\rightarrow [0,1]$ where $d(s)$ 
is the probability to end up in state $s$. Distributions satisfy that $\sum_{s\in S}d(s)=1$. 
The set of all distributions over $S$ is denoted by ${\cal D}(S)$.
\begin{definition}
A probabilistic labelled transition system (pLTS) is a four-tuple $M=(S,\Act,\pijl{},d_0)$ where
$S$ is a finite set of states,
$\Act$ is a finite set of actions,
the relation $\pijl{}\subseteq S\times \Act\times {\cal D}(S)$ represents the transition relation, and
$d_0\in {\cal D}(S)$ is
the initial distribution. 
\end{definition}
We leave out the definition of the interpretation of quantitative modal formulas on probabilistic LTSs, 
as it is standard. Instead, we define the real equation system that is generated given a
modal formula $\phi$ and a probabilistic labelled transition system $M=(S,\Act,\pijl{},d_0)$, following the translations in 
\cite{DBLP:conf/tacas/Mader95,DBLP:journals/tcs/GrooteW05,Mader97,DBLP:books/mit/GrooteM2014}.
The function $\equations(\phi)$ generates the required sequence of RES equations for $\phi$ and
$\RHS(s,\phi)$ yields the expression for the right-hand side of such an equation representing the
value of $\phi$ in state $s$.
\[
\begin{minipage}[t]{0.5\textwidth}
$\begin{array}{l}
\equations(X)=\epsilon,\\
\equations(d)=\epsilon,\\
\equations(c{\cdot}\phi)=\equations(\phi),\\
\equations(\phi_1+\phi_2)=\equations(\phi_1),\equations(\phi_2),\\
\equations(\phi_1\vee\phi_2)=\equations(\phi_1),\equations(\phi_2),\\
\equations(\phi_1\wedge\phi_2)=\equations(\phi_1),\equations(\phi_2),\\
\equations(\langle a\rangle \phi)=\equations(\phi),\\
\equations([ a] \phi)=\equations(\phi),\\
\equations(\mu X.\phi)=\langle \mu X_s=\RHS(s,\phi)\mid s\in S\rangle,\equations(\phi),\\
\equations(\nu X.\phi)=\langle \nu X_s=\RHS(s,\phi)\mid s\in S\rangle,\equations(\phi).\\
\end{array}$
\end{minipage}\hspace*{-1.7cm}
\begin{minipage}[t]{0.5\textwidth}
$\begin{array}{l}
\RHS(s,X)=X_s,\\
\RHS(s,d)=d,\\
\RHS(s,c{\cdot}\phi)=c{\cdot}\RHS(s,\phi),\\
\RHS(s,\phi_1+\phi_2)=\RHS(s,\phi_1)+\RHS(s,\phi_2),\\
\RHS(s,\phi_1\vee\phi_2)=\RHS(s,\phi_1)\vee\RHS(s,\phi_2),\\
\RHS(s,\phi_1\wedge\phi_2)=\RHS(s,\phi_1)\wedge\RHS(s,\phi_2),\vspace{-0.075cm}\\
\RHS(s,\langle a\rangle \phi)=\bigvee_{\{d\in{\cal D}(S)\mid s\pijl{a}d\}}\sum_{s'\in S}d(s'){\cdot}\RHS(s',\phi),\vspace{-0.075cm}\\
\RHS(s,[ a] \phi)=\bigwedge_{\{d\in{\cal D}(S)\mid s\pijl{a}d\}}\sum_{s'\in S}d(s'){\cdot}\RHS(s',\phi),\\
\hspace{2cm}\RHS(s,\mu X.\phi)=X_s,\\
\hspace{2cm}\RHS(s,\nu X.\phi)=X_s.
\end{array}
$
\end{minipage}
\]
We use the notation $\langle \sigma X_s =e_s\mid s\in S\rangle$ for the sequence of all equations $\sigma X_s=e_s$ for
all states $s \in S$.

The evaluation of a modal formula $\phi$ in $M$ with initial distribution $d_0$
is the solution in $\hat{\Real}$ of variable $X_{\textit{init}}$ in the RES
$\mu X_{\textit{init}}=(\sum_{s\in S}d_0(s){\cdot} \RHS(s,\phi)),~\equations(\phi)$.
The use of the minimal fixed-point for the initial variable is of no consequence as
$X_{\textit{init}}$ does not occur elsewhere in the equation system. 
A maximal fixed-point could also be used.

\section{Applications}
\subsection{The longest $a$-sequence to a $b$-loop}
We are interested in the longest sequence of actions $a$ to reach a state where an infinite sequence
of actions $b$ can be done. The modal formula that expresses this is the following:
\[\mu X.(1+\langle a\rangle X)\vee(0\wedge \nu Y.\langle b\rangle Y).\]
The last part with the maximal fixed-point $0\wedge\nu Y.\langle b\rangle Y$ when evaluated in a state equals
$-\infty$ if no infinite sequence of $b$'s is possible. Otherwise, it evaluates to $0$. 
The first part $1+\langle a\rangle X$ yields $1$ plus the maximum values of the evaluation
of $X$ in all states reachable by an action $a$. If no infinite $b$-sequence can be reached from such a state, this
value is $-\infty$, and otherwise it represents the maximal number of steps to reach such an infinite $b$-sequence.

We evaluate this formula in the labelled transition system given at the right in Figure~\ref{fig:explanation}. 
This leads to the following real equation system  where $X_i$ and $Y_i$ correspond to the value of $X$, resp.\ $Y$ in state $s_i$. The solution of the equation system is written behind
each equation.
\[\begin{array}{lr@{\hspace{1.3cm}}lr}
\mu X_1=(1+(X_2\vee X_3\vee X_4\vee X_6))\vee(0\wedge Y_1)&2&\nu Y_1=-\infty&-\infty\\
\mu X_2=(1+X_3)\vee(0\wedge Y_2)&1&\nu Y_2=-\infty&-\infty\\
\mu X_3=(1+-\infty)\vee(0\wedge Y_3)&0&\nu Y_3=Y_3&\infty\\
\mu X_4=(1+X_5)\vee (0\wedge Y_4)&-\infty&\nu Y_4=-\infty&-\infty\\
\mu X_5=(1+X_6)\vee(0\wedge Y_5)&-\infty&\nu Y_5=-\infty&-\infty\\
\mu X_6=(1+-\infty)\vee(0\wedge Y_6)&-\infty&\nu Y_6=-\infty&-\infty
\end{array}\]
We find that the longest sequence of actions $a$ is $2$, which matches
our expectation.

\subsection{The probability to reach a loop}
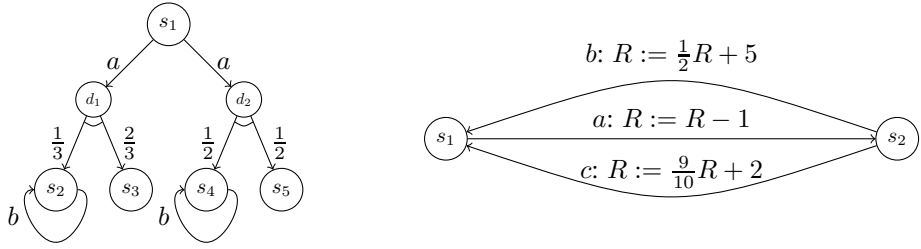
\begin{figure}
\begin{center}
\begin{minipage}{0.3\linewidth}
\begin{tikzpicture}
\draw (0,3) node[shape=circle, scale=0.8, draw] (s1) {$s_1$};
\draw (-1,2) node[shape=circle, scale=0.6, draw] (d1) {$d_1$};
\draw (1,2) node[shape=circle, scale=0.6, draw] (d2) {$d_2$};
\draw (-1.5,0.8) node[shape=circle, scale=0.8, draw] (s2) {$s_2$};
\draw (-0.5,0.8) node[shape=circle, scale=0.8, draw] (s3) {$s_3$};
\draw (0.5,0.8) node[shape=circle, scale=0.8, draw] (s4) {$s_4$};
\draw (1.5,0.8) node[shape=circle, scale=0.8, draw] (s5) {$s_5$};
\draw (-1.5,0.1) node[shape=circle, scale=0.005, draw] (s2fake) {};
\draw (0.5,0.1) node[shape=circle, scale=0.005, draw] (s4fake) {};
\draw(-1.12,1.7) .. controls (-1,1.63) .. (-0.88,1.7);
\draw(1.12,1.7) .. controls (1,1.63) .. (0.88,1.7);

\draw[->] (s1) -- node [left] {$a$} (d1);
\draw[->] (s1) -- node [right] {$a$} (d2);
\draw[->] (d1) -- node [left] {$\frac{1}{3}$} (s2);
\draw[->] (d1) -- node [right] {$\frac{2}{3}$} (s3);
\draw[->] (d2) -- node [left] {$\frac{1}{2}$} (s4);
\draw[->] (d2) -- node [right] {$\frac{1}{2}$} (s5);

\draw (s2) to[out=0, in=0]  (s2fake);
\draw[->] (s2fake) to[out=180, in=180] node [left] {$b$} (s2);
\draw (s4) to[out=0, in=0]  (s4fake);
\draw[->] (s4fake) to[out=180, in=180] node [left] {$b$} (s4);

\end{tikzpicture}
\end{minipage}
\hspace{1cm}
\begin{minipage}{0.5\linewidth}
\begin{tikzpicture}
\draw (-2,0) node[shape=circle, scale=0.8, draw] (s1) {$s_1$};
\draw (4,0) node[shape=circle, scale=0.8, draw] (s2) {$s_2$};
\draw[->] (s1) -- node [above] {$a$: $R:=R-1$} (s2);
\draw[->] (s2) .. controls (1,1) .. node [above] {$b$: $R:=\frac{1}{2}R+5$}  (s1);
\draw[->] (s2) .. controls (1,-1) .. node [above] {$c$: $R:=\frac{9}{10}R+2$}  (s1);

\end{tikzpicture}
\end{minipage}
\end{center}
\caption{A probabilistic LTS with a loop/An LTS with rewards}
\label{fig:probability/reward}
\end{figure}

We are interested in the probability to reach a $b$-loop. We apply it to the LTS at the left in Figure \ref{fig:probability/reward}. 
Due to the non-determinism there are more paths to such loops, and we are interested in the path
with the highest probability. This is expressed by the modal formula
\[\mu X.\langle a \rangle X\vee  \langle b \rangle X \vee ((\nu Y.\langle b\rangle Y \vee 0)\wedge 1).\]
The formula $\nu Y.\langle b\rangle Y{\vee}0$ yields $\infty$ if an infinite sequence of actions $b$ is possible 
and $0$ otherwise.
As we want a probability, we use $\_\wedge 1$ and $\_\vee 0$ to enforce that the solution is in $[0,1]$.

The translation of this formula on the labelled transition system in Figure \ref{fig:probability/reward} yields the following 
real equation system. 
\[\begin{array}{lcl@{\hspace{1cm}}lcl}
\multicolumn{3}{l}{\mu X_1=(\frac{1}{3}{\cdot}X_2+\frac{2}{3}{\cdot}X_3)\vee(\frac{1}{2}{\cdot}X_4+\frac{1}{2}{\cdot}X_5)\vee (Y_1\wedge 1)}\hspace{0.8cm}&\nu Y_1=-\infty\vee 0     &=&0,\\
& =&\frac{1}{3}\vee\frac{1}{2}\vee  0 = \frac{1}{2},  &&\\
\mu X_2= X_2\vee (Y_2\wedge 1)       &=& X_2\vee 1 = 1,& \nu Y_2=Y_2         &=&\infty,\\
\mu X_3= -\infty \vee (Y_3\wedge 1)      &=& -\infty \vee 0 =0,&\nu Y_3=-\infty\vee 0     &=&0,\\
\mu X_4= X_4\vee (Y_4\wedge 1)       &=& X_4 \vee 1 = 1,&\nu Y_4=Y_4         &=&\infty,\\
\mu X_5= -\infty  \vee (Y_5\wedge 1) &=& -\infty \vee 0=0,&\nu Y_5=-\infty\vee 0     &=&0.
\end{array}\]
This shows that the maximal probability to reach a $b$-loop is $\frac{1}{2}$.
\subsection{Determining the reward of process behaviour}

In Figure \ref{fig:probability/reward} at the right a labelled transition system is drawn, where a reward $R$ is 
changed when a transition
takes place. The transition labelled with action $a$ costs one unit,
$b$ yields $\frac{1}{2}R+5$ units, and the transition $c$ 
adapts the reward by $\frac{9}{10}R+2$. We want to know what the maximal stable reward is. This is expressed
by the following formula:
\[\mu R.\langle a\rangle(R-1)\vee \langle b\rangle(\tfrac{1}{2}{\cdot}R+5)\vee \langle c\rangle(\tfrac{9}{10}{\cdot}R+2)\vee 0.\]
Note that we express this as the minimal reward larger than $0$, which is the maximum of all individual rewards.
Translating this to a real equation system yields
\[\begin{array}{l}
\mu R_1=(R_2-1)\vee-\infty\vee-\infty\vee 0,~~~~~
\mu R_2=-\infty \vee (\frac{1}{2}{\cdot}R_1+5)\vee (\frac{9}{10}R_1+2)\vee 0.
\end{array}\]
We solve this using Gau\ss-elimination. This means that the second equation is
substituted in the first, which, after some straightforward simplifications, gives us
\[\mu R_1= (\tfrac{1}{2}{\cdot}R_1+4)\vee(\tfrac{9}{10}{\cdot}R_1+1)\vee 0.\]
We solve this equation using the technique of Section \ref{sec:singleequation}, leading to:
\[R_1=\frac{4}{1-\frac{1}{2}}\vee\frac{1}{1-\frac{9}{10}}\vee 0=10.\]
\section{Conclusions and outlook}
We introduce real equation systems (RESs) as the pendant of Boolean Equation Systems with solutions in the
domain of the reals extended with $\pm\infty$. By a number of examples we show how this can be used to 
evaluate a wide range of quantitative properties of process behaviour. 

We provide a complete method to solve RESs using an extension of what is called `Gau\ss-elimination' \cite{Mader97}
to solve boolean equation systems. It shows that any RES can be solved by carrying out a finite number of 
substitutions. As solving RESs generalises solving BESs, and Gau\ss-elimination on BESs is exponential,
our Gau\ss-elimination technique can also lead to exponential
growth of intermediate terms. A prototype implementation shows that depending on the nature of the system being analysed,
this may or may not be an issue. For instance, analysing the Game of the Goose \cite{DBLP:journals/siamrev/GrooteWZ16} 
or The Ant on a Grid \cite{DBLP:conf/forte/Das023} are practically undoable with the method proposed here, 
while the Lost Boarding Pass Problem
\cite{DBLP:conf/birthday/GrooteV17} is easily solved, even for planes
with 100,000 passengers.

We believe that the next step is to come up with algorithms that are more efficient in practice than 
Gau\ss-elimination. This is motivated by the situation with BESs where for instance the
recursive algorithm \cite{DBLP:journals/apal/McNaughton93,Zielonka98} turns out to be practically far more efficient than Gau\ss-elimination \cite{DBLP:journals/corr/GazdaW13}. 
\bibliographystyle{plain} 
\bibliography{references} 

\newpage
\appendix
\section{Full proofs of the lemmas and theorems in this paper}
\label{sec:proofs}
This appendix repeats  
all lemmas and theorems in this paper and adds proofs.
\begin{lemma-arg}
{Lemma \ref{la:simpleNormalForm}}
Each expression $e$ not containing the conditional operators $\cond{e_1}{e_2}{e_3}$ or $\conda{e_1}{e_2}{e_3}$ can be
rewritten to a simple conjunctive or disjunctive normal form using the equations in Table \ref{table:identities}.
\end{lemma-arg}
\begin{proof} 
The proof uses induction on the structure of terms. The only case that is more involved is if $e$ has the shape
$\eqninf(e')$. For this we use 
that $\eqninf(\sum_{i\in I} e_i)=(\bigwedge_{i\in I}\eqninf(e_i))\vee\bigvee_{i\in I}\eqinf(e_i)$ which is provable
with induction on the finite index set $I$. 
%
%
\end{proof}
\begin{lemma-arg}{Lemma \ref{la:simpleNormalFormCond}}
Expression of the forms $\cond{e_1}{e_2}{e_3}$ and $\conda{e_1}{e_2}{e_3}$ can be rewritten to
equivalent expressions where the first argument of such a conditional operator 
is a simple conjunctive or disjunctive normal form  using the equations in Table \ref{table:identities}.
\end{lemma-arg}
\begin{proof} 
The proof uses induction on the number of operators $\cond{\_}{\_}{\_}/\conda{\_}{\_}{\_}$
in $e_1$.

The case where $e_1$ is $X$ or $d$ is trivial. If $e_1$ does not contain a
conditional operator, we are ready using Lemma \ref{la:simpleNormalForm}.
Otherwise, if any of the operators
$c{\cdot}{}$, $+$, $\vee$, $\wedge$, or $\eqninf$ occur as outermost symbols
of $e_1$, 
they can be pushed inside the conditional operator, transforming $e_1$ to an 
expression of the shape $\cond{f_1}{f_2}{f_3}$ or $\conda{f_1}{f_2}{f_3}$. The four cases that ensue are all similar.
We only show one case and derive it using equation $\textrm{D}_{\Rightarrow}^{\Rightarrow}$:
\[\begin{array}{l}
\cond{e_1}{e_2}{e_3}=
\cond{(\cond{f_{1}}{f_{2}}{f_{3}})}{e_{2}}{e_{3}}=
\cond{((f_{1}\vee f_{2})\wedge f_{3})}{e_{2}}{e_{3}}.
\end{array}
\]
Now $(f_{1}\vee f_{2})\wedge f_{3}$ is an expression containing one less
conditional operator, and hence, using the induction hypothesis, we can 
transform it to a simple conjunctive/disjunctive normal form. 
%
This finishes the proof.

\end{proof}

\begin{theorem-arg}{Theorem \ref{tm:normalform}}
Each expression $e$ can be
rewritten to both a conjunctive and a disjunctive normal form using the equations in 
Table \ref{table:identities}.
\end{theorem-arg}
\begin{proof} The proof uses induction on the structure of expressions.
\begin{itemize}
\item
The expressions $d$ and $X$ are by themselves conjunctive and disjunctive normal forms. 
\item
Consider the expressions $c{\cdot}e$. Using the induction hypothesis, there is a conjunctive/disjunctive normal form equal
to $e$. The normal form $c{\cdot}e$ is obtained by pushing $c$ inside the normal form.
\item
Consider the expressions $e_1+e_2$, $e_1\vee e_2$ and $e_1\wedge e_2$. 
If $e_1$ and $e_2$ are simple normal forms, the result follows by Lemma \ref{la:simpleNormalForm}.
If one or both of $e_1$ and $e_2$ has the shape $\cond{f_1}{f_2}{f_3}$, then the other term can be pushed inside the second and third
argument, and using the induction hypothesis, these terms can be transformed to the required normal forms also.
\item
For an expression of the form $\eqninf(e)$, we find with induction a conjunctive/disjunctive 
normal form for $e$. Using the equations in Table \ref{table:identities}, and using the identity from the proof of Lemma \ref{la:simpleNormalForm}, 
the operator $\eqninf$ can be pushed inside, leading to the required normal form. 
\item
The last cases are $\cond{e_1}{e_2}{e_3}/\conda{e_1}{e_2}{e_3}$. 
Using the induction hypothesis there are conjunctive/disjunctive 
normal forms $f_1$, $f_2$ and $f_3$ equal
to $e_1$, $e_2$ and $e_3$, respectively.
If $e_1$ has a simple conjunctive/disjunctive normal form, we are ready, as in that case $\cond{f_1}{f_2}{f_3}$, 
respectively, $\conda{f_1}{f_2}{f_3}$ is the
required normal form.

The only non-trivial case is if $f_1$ has the shape $\cond{f_{11}}{f_{12}}{f_{13}}$ or $\conda{f_{11}}{f_{12}}{f_{13}}$. 
But in this case Lemma \ref{la:simpleNormalFormCond} applies, 
also leading to the required normal form.
\end{itemize}
\end{proof}
The following lemma provides a monotonicity property that we require and that does not occur
in the main text. We write $\eta\geq\eta'$ for \valuations $\eta$ and $\eta'$ iff $\eta(X)\geq \eta'(X)$ for
all $X\in\vars$. 
\begin{lemma}\label{la:envmonotonicity}
Let ${\cal E}$ be real equation system, $e$ an expression, 
and let $\eta$ and $\eta'$ be \valuations such that $\eta\geq \eta'$.
Then $(\lsb {\cal E}\rsb\eta)(e)\geq (\lsb {\cal E}\rsb\eta')(e)$.
\end{lemma}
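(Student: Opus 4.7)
My plan is to argue by induction on the length of ${\cal E}$. The base case ${\cal E} = \varepsilon$ reduces to showing $\eta(e) \geq \eta'(e)$, which I would handle by a routine structural induction on $e$, invoking the monotonicity on $\hat{\Real}$ of each operator (explicitly noted right after the operator definitions, and easy to verify for $\cond{\_}{\_}{\_}$ and $\conda{\_}{\_}{\_}$ in all three arguments) together with the pointwise assumption $\eta(X) \geq \eta'(X)$ for every $X \in \vars$.

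For the inductive step I would write ${\cal E} = \sigma X = f, {\cal G}$ and expand
\[
\lsb {\cal E} \rsb \eta \;=\; \lsb {\cal G} \rsb (\eta[X := \sigma(X, {\cal G}, \eta, f)]),
\]
and symmetrically for $\eta'$. Setting $\zeta := \eta[X := \sigma(X, {\cal G}, \eta, f)]$ and $\zeta' := \eta'[X := \sigma(X, {\cal G}, \eta', f)]$, the induction hypothesis applied to the shorter system ${\cal G}$ and the expression $e$ would then deliver $(\lsb {\cal G} \rsb \zeta)(e) \geq (\lsb {\cal G} \rsb \zeta')(e)$, provided $\zeta \geq \zeta'$. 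For coordinates $Y \neq X$ the inequality $\zeta(Y) \geq \zeta'(Y)$ is immediate from $\eta \geq \eta'$, so the entire proof reduces to showing $\sigma(X, {\cal G}, \eta, f) \geq \sigma(X, {\cal G}, \eta', f)$, i.e.\ the monotonicity of the fixed-point operator in its ambient \valuation.

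This last step is where I expect the real work to lie, and I would handle it via a set-inclusion argument. Take $\sigma = \mu$ first. For any $r \in \hat{\Real}$ we have $\eta[X := r] \geq \eta'[X := r]$, so the induction hypothesis applied to ${\cal G}$ and to $f$ yields $\lsb {\cal G} \rsb (\eta[X := r])(f) \geq \lsb {\cal G} \rsb (\eta'[X := r])(f)$. Hence every $r$ satisfying $r \geq \lsb {\cal G} \rsb (\eta[X := r])(f)$ automatically satisfies $r \geq \lsb {\cal G} \rsb (\eta'[X := r])(f)$, so the set of post-fixed points associated with $\eta$ is contained in the one associated with $\eta'$. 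The infimum of a larger set can only be smaller, giving $\mu(X, {\cal G}, \eta, f) \geq \mu(X, {\cal G}, \eta', f)$. The $\nu$-case is completely dual: the inclusion between the pre-fixed-point sets $\{r \mid \lsb {\cal G} \rsb (\eta'[X := r])(f) \geq r\}$ and $\{r \mid \lsb {\cal G} \rsb (\eta[X := r])(f) \geq r\}$ runs the other way, and suprema replace infima.

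The main obstacle is precisely the fixed-point step: one has to see that raising the ambient \valuation enlarges the right-hand side pointwise, which \emph{shrinks} the set of post-fixed points of a $\mu$-equation and \emph{enlarges} the set of pre-fixed points of a $\nu$-equation, so that in both cases the extremum moves in the desired direction. Once this monotonicity of the fixed-point semantics is in hand, the inductive decomposition of ${\cal E}$ and the structural induction underlying the base case are bookkeeping.
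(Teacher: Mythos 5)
Your proposal is correct and follows essentially the same route as the paper: induction on the length of $\cal E$, with the base case by monotonicity of the operators and the inductive step reduced to monotonicity of $\sigma(X,{\cal G},\cdot,f)$ in the ambient \valuation, established via the same set-inclusion argument on post-/pre-fixed points that the paper uses (the paper merely folds this into a two-step chain of inequalities rather than first proving $\zeta\geq\zeta'$ and applying the induction hypothesis once).
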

\begin{proof}
We prove this lemma with induction on the size of ${\cal E}$. If ${\cal E}$ is empty, then the lemma
reduces to $\eta(e)\geq\eta'(e)$ which follows by monotonicity of $e$. 

If ${\cal E}$ equals $\sigma X=f,{\cal F}$ then, by definition, we must show that
\[(\lsb{\cal F}\rsb(\eta[X:=\sigma(X,{\cal F},\eta,f)]))(e)\geq (\lsb{\cal F}\rsb(\eta'[X:=\sigma(X,{\cal F},\eta',f)]))(e).\]
We prove this for $\sigma=\mu$. The proof for $\sigma=\nu$ is completely similar.
\[
\begin{array}{l}
(\lsb{\cal F}\rsb(\eta[X:=\sigma(X,{\cal F},\eta,f)]))(e)=\\
(\lsb{\cal F}\rsb(\eta[X:=\bigwedge\{r\in\hat{\Real}\mid r\geq \lsb{\cal F}\rsb(\eta[X:=r])(f)\}]))(e)\geq\\
(\lsb{\cal F}\rsb(\eta[X:=\bigwedge\{r\in\hat{\Real}\mid r\geq \lsb{\cal F}\rsb(\eta'[X:=r])(f)\}]))(e)\geq\\
(\lsb{\cal F}\rsb(\eta'[X:=\bigwedge\{r\in\hat{\Real}\mid r\geq \lsb{\cal F}\rsb(\eta'[X:=r])(f)\}]))(e)=\\
(\lsb{\cal F}\rsb(\eta'[X:=\sigma(X,{\cal F},\eta',f)]))(e).
\end{array}\]
In the first $\geq$ above, we use the induction hypothesis saying that 
$\lsb{\cal F}\rsb(\eta[X:=r])\geq \lsb{\cal F}\rsb(\eta'[X:=r])$ and therefore, 
the minimal fixed-point can only decrease, and hence $\lsb{\cal F}\rsb\eta[X:=\bigwedge\ldots]$ decreases also using the induction hypothesis. 
In the second $\geq$ we again use the induction hypothesis. 
\end{proof}

\begin{lemma-arg}{Lemma \ref{la:nocontext}}
Let $X$ be a variable, $e$ and $f$ be expressions and $\sigma$ either the minimal or the maximal fixed-point symbol.
If for any \valuation $\eta$ it holds that $\lsb \sigma X=e\rsb\eta=\lsb \sigma X=f\rsb\eta$ 
then $\sigma X=e\equiv\sigma X=f$.
\end{lemma-arg}
\begin{proof}
We prove this lemma for $\sigma=\mu$. The case where $\sigma=\nu$ is completely
dual.
First we elaborate a little on the condition of this lemma. It can be rewritten to
\[
\eta[X:=\bigwedge \{r\in\hat{\Real}\mid r\geq \eta[X:=r](e)\}]=\eta[X:=\bigwedge \{r\in\hat{\Real}\mid r\geq \eta[X:=r](f)\}].
\]
Applying both sides to $X$ reduces this further to 
\begin{equation}
\label{eq:usefulproperty}
\bigwedge \{r\in\hat{\Real}\mid r\geq \eta[X:=r](e)\}=\bigwedge \{r\in\hat{\Real}\mid r\geq \eta[X:=r](f)\}. 
\end{equation}
This means that the smallest $r$ satisfying $r\geq \eta[X:=r](e)$ is equal to the smallest $r'$ satisfying
$r'\geq \eta[X:=r'](f)$.
We use this property below.

We must prove that for all \valuations $\eta$ and real equation systems ${\cal F}$ with $X\not\in\bnd({\cal F})$
that
\[
\lsb \mu X=e, {\cal F}\rsb\eta = \lsb \mu X=f, {\cal F}\rsb\eta.
\]
Expanding this definition gives us an equivalent statement.
\[
\begin{array}{l}
\lsb {\cal F}\rsb( \eta[X:=\bigwedge\{r\in\hat{\Real}\mid r\geq \lsb{\cal F}\rsb(\eta[X:=r])(e)\}])=\\
\lsb {\cal F}\rsb( \eta[X:=\bigwedge\{r\in\hat{\Real}\mid r\geq \lsb{\cal F}\rsb(\eta[X:=r])(f)\}]).
\end{array}
\]
Define 
\[\begin{array}{l}
m_e=\bigwedge\{r\in\hat{\Real}\mid r\geq \lsb{\cal F}\rsb(\eta[X:=r])(e)\}\textrm{ and}\\
m_f=\bigwedge\{r\in\hat{\Real}\mid r\geq \lsb{\cal F}\rsb(\eta[X:=r])(f)\}.
\end{array}\]
Note that the lemma follows if we have shown $m_e=m_f$, which we do below.

Due to symmetry we assume that $m_f\leq m_e$ without loss of generality.
Consider the following expression.
\[m=\bigwedge\{r\in\hat{\Real}\mid r\geq \zeta[X:=r](f)\}\]
where $\zeta=\lsb{\cal F}\rsb(\eta[X:=m_f])$. Clearly, $m_f$ satisfies
\[ m_f\geq \zeta[X:=m_f](f)\]
as this is equivalent to 
\[ m_f\geq (\lsb{\cal F}\rsb(\eta[X:=m_f]))[X:=m_f](f).\]
So, $m\leq m_f$. Vice versa, $m$ satisfies
\[m\geq \zeta[X:=m](f).\]
This implies
\[m\geq (\lsb{\cal F}\rsb(\eta[X:=m_f]))[X:=m](f)\geq (\lsb{\cal F}\rsb(\eta[X:=m]))[X:=m](f)=(\lsb{\cal F}\rsb\eta[X:=m])(f)\]
using Lemma \ref{la:envmonotonicity} and the fact that $m_f\geq m$. 
From this we derive that $m_f\leq m$,
and combined with the already derived $m\leq m_f$, that $m=m_f$. 

We now turn our attention to $m_e$ and show that $m$ is a solution for $r$ in 
\[r\geq \lsb{\cal F}\rsb(\eta[X:=r])(e).\]
So, we must show
\[m\geq \lsb{\cal F}\rsb(\eta[X:=m])(e).\]
We know that $m$ is the smallest value that satisfies
\[
m\geq \zeta[X:=m](f).
\]
By (\ref{eq:usefulproperty}) we also have that
\[
m\geq \zeta[X:=m](e).
\]
Combining these results leads to
\[
m\geq\zeta[X:=m](e)=(\lsb{\cal F}\rsb(\eta[X:=m_f]))[X:=m](e)=\lsb{\cal F}\rsb(\eta[X:=m])(e)
\]
where $m=m_f$ is used in the last equality. 
Hence, we know that $m_e\leq m$, and since $m=m_f$, also
$m_e\leq m_f$. We conclude $m_e=m_f$, which means we have proven this lemma.
\end{proof}

\begin{lemma-arg}{Lemma \ref{la:generic1}}
\label{la:generic1app}
Consider some variable $X$. We find that  $\mu X=e~\equiv~\mu X=f$ 
if for every \valuation $\eta$:
\begin{enumerate}
\item for the smallest $r\in\hat{\Real}$ such that $r=\eta[X:=r](e)$ it holds that
there is an $r'\in\hat{\Real}$ satisfying that $r'\leq r$ and $r'\geq\eta[X:=r'](f)$, and vice versa, 
\item for the smallest $r\in\hat{\Real}$ such that $r=\eta[X:=r](f)$ it holds that
there is an $r'\in\hat{\Real}$ satisfying that $r'\leq r$ and $r'\geq\eta[X:=r'](e)$. 
\end{enumerate}
Dually, it is the case that 
$\nu X=e~\equiv~\nu X=f$ 
if for every \valuation $\eta$:
\begin{enumerate}
\item for the largest $r\in\hat{\Real}$ such that $r=\eta[X:=r](e)$ it holds that
there is an $r'\in\hat{\Real}$ satisfying that $r'\geq r$ and $r'\leq \eta[X:=r'](f)$, and vice versa, 
\item for the largest $r\in\hat{\Real}$ such that $r=\eta[X:=r](f)$ it holds that
there is an $r'\in\hat{\Real}$ satisfying that $r'\geq r$ and $r'\leq\eta[X:=r'](e)$. 
\end{enumerate}
\end{lemma-arg}
\begin{proof}
Due to duality, we only provide the proof for the minimal fixed-point. 
Define the following two sets:
\[
\begin{array}{l}
S_e=\{r\in\hat{\Real}\mid r\geq \eta[X:=r](e)\}\\
S_f=\{r\in\hat{\Real}\mid r\geq \eta[X:=r](f)\}.
\end{array}
\]
We first prove that $\bigwedge S_e=\bigwedge S_f$. Consider $r=\bigwedge S_e$. By the first condition,
there is an $r'\leq r$ such that $r'\geq\eta[X:=r'](f)$. Hence, $\bigwedge S_f\leq r'\leq r=\bigwedge S_e$. 
Using the second condition we prove similarly that $\bigwedge S_e\leq \bigwedge S_f$. So, we conclude 
$\bigwedge S_e= \bigwedge S_f$.

The remainder of the proof consists of a straightforward expansion of the definition. In order to prove that
$\mu X=e~\equiv~\mu X=f$, 
it suffices to prove that \[\lsb \mu X=e\rsb\eta=\lsb \mu X=f\rsb\] using Lemma \ref{la:nocontext}.
Expanding this further, yields the equivalent equality
\[\eta[X:=\bigwedge S_e]=\eta[X:=\bigwedge S_f]\]
with $S_e$ and $S_f$ as defined above. As we have already derived that $\bigwedge S_e=\bigwedge S_f$, 
we can conclude that this last equation is derivable, and hence the lemma follows.
\end{proof}
\begin{lemma-arg}{Lemma \ref{la:generic2}}
\label{la:generic2app}
If $\mu X=e~\equiv~\mu X=f$, then for any \valuation $\eta$ it holds that
\begin{enumerate}
\item
for any $r\in \hat{\Real}$ such that $r\geq \eta[X:=r](e)$, there is an $r'\in\hat{\Real}$ such
that $r'\leq r$ and $r'= \eta[X:=r'](f)$, and vice versa, 
\item
for any $r\in \hat{\Real}$ such that $r\geq \eta[X:=r](f)$, there is an $r'\in\hat{\Real}$ such
that $r'\leq r$ and $r'= \eta[X:=r'](e)$.
\end{enumerate} 
If $\nu X=e~\equiv~\nu X=f$, then for any \valuation $\eta$ it holds that
\begin{enumerate}
\item
for any $r\in \hat{\Real}$ such that $\eta[X:=r](e)\geq r$, there is an $r'\in\hat{\Real}$ such
that $r'\geq r$ and $r'= \eta[X:=r'](f)$, and vice versa, 
\item
for any $r\in \hat{\Real}$ such that $\eta[X:=r](f)\geq r$, there is an $r'\in\hat{\Real}$ such
that $r'\geq r$ and $r'= \eta[X:=r'](e)$.
\end{enumerate} 

\end{lemma-arg}
\begin{proof}
Both statements in this lemma are dual to each other, so, we only provide the proof for the minimal fixed-point. 
The statement $\mu X=e~\equiv~\mu X=f$ implies to the following equation by expanding the definitions with
an empty real equation system.
For any \valuation $\eta$
\[ \eta[X:=\bigwedge S_e]=\eta[X:=\bigwedge S_f]\]
where $S_e=\{r\in\hat{\Real}\mid r\geq \eta[X:=r](e)\}$ and 
$S_f=\{r\in\hat{\Real}\mid r\geq \eta[X:=r](f)\}$. From this we can conclude $\bigwedge S_e =\bigwedge S_f$. 

According to the condition of case 1 of this lemma there
is an $r\in\hat{\Real}$ such that $r\leq \eta[X:=r](e)$. Clearly, $\bigwedge S_e\leq r$. Now take $r'=\bigwedge S_f$.
Clearly, $r'=\bigwedge S_f=\bigwedge S_e\leq r$ and $r'$ satisfies the equation $r'=\eta[X:=r'](f)$ because it
is a fixed-point. 

The second part is completely symmetric to the first and has the same proof. 
\end{proof}
\begin{theorem-arg}{Theorem \ref{tm:main}}
For any fixed-point symbol $\sigma$, variable $X\in\vars$ and expression $e$, it holds that
\[\sigma X=e~\equiv~\sigma X=\Sol_{X=e}^\sigma,\]
where $\Sol_{X=e}^\sigma$ is defined in the main text of this paper. 
Furthermore, the variable $X$ does not occur in $\Sol_{X=e}^\sigma$. 
\end{theorem-arg}
\begin{proof}
By construction it is straightforward to see that $X$ does not occur in $\Sol_{X=e}^\sigma$. 
We concentrate on 
the first part of this theorem.

Consider an equation of the shape $\sigma X=e$. If $\sigma=\mu$ we can assume that $e$ is a conjunctive normal form, and
if $\sigma=\nu$ we can assume $e$ is a disjunctive normal form, by Theorem \ref{tm:normalform}.

We show by induction on the number of conditional operators in $e$ that the first part of the theorem holds. By the normal form theorem,
$e$ either consists of an application of a conditional operator or it is a simple conjunctive/disjunctive normal form.
\begin{itemize}
\item
Assume $e$ has the shape $\cond{e_1}{e_2}{e_3}$. We know using the induction hypothesis that the equations 
$\sigma X=e_2$, $\sigma X=e_2\wedge e_3$ and $\sigma X=e_3$ have equivalent equations $\sigma X=\Sol^\sigma_{X=e_2}$, 
$\sigma X=\Sol^\sigma_{X=e_2\wedge e_3}$ and $\sigma X=\Sol^\sigma_{X=e_3}$. 
For these equivalences we know the properties as listed in
Lemma \ref{la:generic2}.

First we consider the case where $\sigma=\mu$. 
We use Lemma \ref{la:generic1}. So, we fix some \valuation $\eta$ and we show that
both cases 1 and 2 of Lemma \ref{la:generic1} hold. For case 1 we can assume that there is an $r\in\hat{\Real}$ such
that $r=\eta[X:=r](\cond{e_1}{e_2}{e_3})$. It suffices to show that there is an $r'\leq r$ such that 
$r'\geq \eta[X:=r'](\cond{(e_1[X:=\Sol^\mu_{X=e_2}\wedge \Sol^\mu_{X=e_3}])}{\Sol^\mu_{X=e_2}}{\Sol^\mu_{X=e_3}})$. We distinguish two cases.
\begin{itemize}
\item
First the situation where $\eta[X:=r](e_1)\leq 0$ is considered. In this case $r=\eta[X:=r](e_2\wedge e_3)$, 
and hence $r= \eta[X:=r](e_2)$ or $r= \eta[X:=r](e_3)$. So, using the induction hypothesis and
Lemma \ref{la:generic2} there is an $r'\leq r$ such that either $r'=\eta[X:=r'](\Sol^\mu_{X=e_2})$ or 
$r'=\eta[X:=r'](\Sol^\mu_{X=e_3})$. In either case, $r'\geq \eta[X:=r'](\Sol^\mu_{X=e_2}\wedge \Sol^\mu_{X=e_3})$. We find that 
$\eta(e_1[X:=\Sol^\mu_{X=e_2}\wedge \Sol^\mu_{X=e_3}])\leq \eta[X:=r'](e_1)\leq \eta[X:=r](e_1)\leq 0$.
So, we can derive that 
\[
\begin{array}{l}
\eta[X:=r'](\cond{(e_1[X:=\Sol^\mu_{X=e_2}\wedge \Sol^\mu_{X=e_3}])}{\Sol^\mu_{X=e_2}}{\Sol^\mu_{X=e_3}})=\\
\eta[X:=r'](\Sol^\mu_{X=e_2}\wedge \Sol^\mu_{X=e_3})\leq r'
\end{array}\]
as was to be shown. 
\item
Now we investigate the situation where $\eta[X:=r](e_1)>0$. It follows that $r=\eta[X:=r](e_3)$.
Using the induction hypothesis and Lemma \ref{la:generic2} we know that there is some $r'\leq r$ such
that $r'=\eta[X:=r'](\Sol^\mu_{X=e_3})$. Hence, $r'$ also satisfies $r'\geq\eta[X:=r'](\Sol^\mu_{X=e_2}\wedge \Sol^\mu_{X=e_3})$. So, we can
conclude that $r'\geq \eta[X:=r'](\cond{(e_1[X:=\Sol^\mu_{X=e_2}\wedge \Sol^\mu_{X=e_3}])}{\Sol^\mu_{X=e_2}}{\Sol^\mu_{X=e_3}})$ as we had to show.
\end{itemize}

For case 2 of Lemma \ref{la:generic1} and the minimal fixed-point, we consider some \valuation $\eta$ and we assume there is an $r\in\hat{\Real}$ such that 
$r=\eta[X:=r](\cond{(e_1[X:=\Sol^\mu_{X=e_2}\wedge \Sol^\mu_{X=e_3}])}{\Sol^\mu_{X=e_2}}{\Sol^\mu_{X=e_3}})$. We must show that there is an $r'\leq r$
such that $r'\geq \eta[X:=r'](\cond{e_1}{e_2}{e_3})$. We distinguish two cases.
\begin{itemize}
\item
First assume $\eta[X:=r](e_1[X:=\Sol^\mu_{X=e_2}\wedge \Sol^\mu_{X=e_3}])\leq 0$. In that case $r=\eta(\Sol^\mu_{X=e_2})\wedge \eta(\Sol^\mu_{X=e_3})$. 
By the induction hypothesis and Lemma \ref{la:generic2} it follows that there is an $r_1\leq r$ such that
$r_1=\eta[X:=r_1](e_2)$ and there is an $r_2\leq r$ such that $r_2=\eta[X:=r_2](e_3)$. 
Define $r'=r_1\wedge r_2$. Clearly, $r'\leq r$. 
Observe that $\eta[X:=r'](e_1)=\eta[X:=r_1\wedge r_2](e_1)\leq \eta[X:=r](e_1)=\eta(e_1[X:=\Sol^\mu_{X=e_2}\wedge \Sol^\mu_{X=e_3}])\leq 0$.
Hence, 
$\eta[X:=r'](\cond{e_1}{e_2}{e_3})$ is equal to $\eta[X:=r'](e_2\wedge e_3)$.
We find $r'=r_1\wedge r_2=\eta[X:=r_1](e_2)\wedge \eta[X:=r_2](e_3)\geq 
\eta[X:=r_1\wedge r_2](e_2)\wedge \eta[X:=r_1\wedge r_2](e_3)=\eta[X:=r'](e_2\wedge e_3)$.
Hence, $r'\geq \eta[X:=r'](\cond{e_1}{e_2}{e_3})$
as had to be shown. 
\item
Now assume $\eta[X:=r](e_1[X:=\Sol^\mu_{X=e_2}\wedge \Sol^\mu_{X=e_3}])>0$. Hence, $r=\eta(\Sol^\mu_{X=e_3})$. 
So, using the induction hypothesis
and Lemma \ref{la:generic2} there is an $r'\leq r$ such that $r'=\eta[X:=r'](e_3)$. So, it also follows
that $r'\geq \eta[X:=r'](e_2\wedge e_3)$. Hence, $r'\geq \eta[X:=r'](\cond{e_1}{e_2}{e_3})$ as it is larger
than both possible outcomes of the conditional expression, which finishes this case.
\end{itemize}
This means the proof for the minimal fixed-point is finished.

Now we consider the case where $\sigma=\nu$. The proof is very similar to that of the minimal fixed-point,
but as reasoning with fixed-points is tedious we give it in full.

We again apply Lemma \ref{la:generic1}. So, fix some \valuation $\eta$. For case 1 of Lemma \ref{la:generic1} 
consider an $r$ such that $r=\eta[X:=r](\cond{e_1}{e_2}{e_3})$. We are ready with this case if we have shown 
that there is an $r'\geq r$ and $r'\leq \eta[X:=r'](\cond{(e_1[X:=\Sol^\nu_{X=e_3}])}{\Sol^\nu_{X=e_2\wedge e_3}}{\Sol^\nu_{X=e_3}})$.
We distinguish two cases.
\begin{itemize}
\item
First we consider the case where $\eta[X:=r](e_1)\leq 0$. Then $r=\eta[X:=r](e_2\wedge e_3)$. So, $r\leq\eta[X:=r](e_3)$.
Using the induction hypothesis and by applying Lemma \ref{la:generic2}, we know that there
are $r_1\geq r$ such that $r_1=\eta[X:=r_1](\Sol^\nu_{X=e_2\wedge e_3})$, and $r_2\geq r$ such that $r_2=\eta[X:=r_2](\Sol^\nu_{X=e_3})$.
Choose $r'=r_1\wedge r_2$, i.e., the minimum of the two. Clearly, $r'\geq r$. Furthermore, 
$\eta[X:=r'](\cond{(e_1[X:=\Sol^\nu_{X=e_3}])}{\Sol^\nu_{X=e_2\wedge e_3}}{\Sol^\nu_{X=e_3}})$ is either equal to 
$\eta(\Sol^\nu_{X=e_2\wedge e_3}\wedge\Sol^\nu_{X=e_3})$ or to $\eta(\Sol^\nu_{X=e_3})$.
In the first case we find that $r'=r_1\wedge r_2=\eta(\Sol^\nu_{X=e_2\wedge e_3})\wedge\eta(\Sol^\nu_{X=e_3})=
\eta[X:=r'](\Sol^\nu_{X=e_2\wedge e_3}\wedge\Sol^\nu_{X=e_3})$, and
in the second case
$r'=r_1\wedge r_2\leq r_2=\eta(\Sol^\nu_{X=e_3})=\eta[X:=r'](\Sol^\nu_{X=e_3})$. From these two cases it follows that
$r'\leq \eta[X:=r'](\cond{(e_1[X:=\Sol^\nu_{X=e_3}])}{\Sol^\nu_{X=e_2\wedge e_3}}{\Sol^\nu_{X=e_3}})$ as had to be shown. 
\item
Second, we consider the case $\eta[X:=r](e_1)>0$. Then $r=\eta[X:=r](e_3)$. Using the induction
hypothesis and Lemma \ref{la:generic2} there is an $r'\geq r$ such that $r'=\eta[X:=r'](\Sol^\nu_{X=e_3})$. 
So, we find that $\eta(e_1[X:=\Sol^\nu_{X=e_3}])=\eta[X:=r'](e_1)\geq \eta[X:=r](e_1)>0$. Hence, 
\[\eta[X:=r'](\cond{(e_1[X:=\Sol^\nu_{X=e_3}])}{\Sol^\nu_{X=e_2\wedge e_3}}{\Sol^\nu_{X=e_3}})=\eta[X:=r'](\Sol^\nu_{X=e_3})=r'\]
which implies our proof obligation.
\end{itemize}

Now we concentrate on case 2 of Lemma \ref{la:generic1} for the maximal fixed-point. So, we consider
an $r\in\hat{\Real}$ such that $r=\eta[X:=r](\cond{(e_1[X:=\Sol^\nu_{X=e_3}])}{\Sol^\nu_{X=e_2\wedge e_3}}{\Sol^\nu_{X=e_3}})$, 
and we must show that
an $r'\geq r$ exists such that $r'\leq \eta[X:=r'](\cond{e_1}{e_2}{e_3})$. Again, we distinguish
two cases.
\begin{itemize}
\item
Assume $\eta(e_1[X:=\Sol^\nu_{X=e_3}])\leq 0$. It follows that $r=\eta(\Sol^\nu_{X=e_2\wedge e_3})$. Using the induction hypothesis and
Lemma \ref{la:generic2} it follows that there is an $r'\geq r$ such that $r'=\eta[X:=r'](e_2\wedge e_3)$. 
So, it follows that $r'\leq\eta[X:=r'](e_3)$. As $\eta[X:=r'](\cond{e_1}{e_2}{e_3})$ must be equal
to one of these, we find that $r'\leq \eta[X:=r'](\cond{e_1}{e_2}{e_3})$ as we had to show.
\item
Assume $\eta(e_1[X:=\Sol^\nu_{X=e_3}])> 0$. It follows that $r=\eta(\Sol^\nu_{X=e_3})$. Using the induction hypothesis and
Lemma \ref{la:generic2} there is an $r'\geq r$ such that $r'=\eta[X:=r'](e_3)$. 
So, $\eta[X:=r'](e_1)\geq \eta[X:=r](e_1)=\eta(e_1[X:=\Sol^\nu_{X=e_3}])>0$. 
Hence, $\eta[X:=r'](\cond{e_1}{e_2}{e_3})=\eta[X:=r'](e_3)=r'$ and this is sufficient to finish the proof for
the lemma for this case.
\end{itemize}
\item
The proof where $e$ has the shape of the conditional operator $\conda{e_1}{e_2}{e_3}$ is quite similar, but due to the intricate nature
of fixed-point proofs, we provide it explicitly. 

First we consider the case where $\sigma=\mu$. 
We use Lemma \ref{la:generic1}. So, for some \valuation $\eta$ we show that
both cases 1 and 2 of Lemma \ref{la:generic1} hold. For case 1 we can assume that there is an $r\in\hat{\Real}$ such
that $r=\eta[X:=r](\conda{e_1}{e_2}{e_3})$. It suffices to show that there is an $r'\leq r$ such that 
$r'\geq \eta[X:=r'](\conda{(e_1[X:=\Sol^\mu_{X=e_2}])}{\Sol^\mu_{X=e_2}}{\Sol^\mu_{X=e_2\vee e_3}})$. We distinguish two cases.
\begin{itemize}
\item
First the situation where $\eta[X:=r](e_1)< 0$ is considered. In this case $r=\eta[X:=r](e_2)$.
Using the induction hypothesis and
Lemma \ref{la:generic2} there is an $r'\leq r$ such that $r'=\eta[X:=r'](\Sol^\mu_{X=e_2})$. 
From this it follows that $\eta[X:=r'](e_1[X:=\Sol^\mu_{X=e_2}])=\eta[X:=r'](e_1)\leq \eta[X:=r](e_1)<0$.
This allows us to derive
\[
\begin{array}{l}
\eta[X:=r'](\conda{(e_1[X:=\Sol^\mu_{X=e_2}])}{\Sol^\mu_{X=e_2}}{\Sol^\mu_{X=e_2\vee e_3}})=
\eta[X:=r'](\Sol^\mu_{X=e_2})= r',
\end{array}\]
which implies our proof obligation. 
\item
Now we investigate the situation where $\eta[X:=r](e_1)\geq 0$. It follows that $r=\eta[X:=r](e_2\vee e_3)$.
From this it follows that $r\geq \eta[X:=r](e_2)$. 
Using the induction hypothesis and Lemma \ref{la:generic2} we know that there are $r_1\leq r$ such that
$r_1=\eta[X:=r_1](\Sol^\mu_{X=e_2})$, and $r_2\leq r$ such that
$r_2=\eta[X:=r_2](\Sol^\mu_{X=e_2\vee e_3})$. Define $r'=r_1\vee r_2$. Clearly, $r'\leq r$. 
We find that $r'=r_1\vee r_2\geq r_1=\eta[X:=r_1](\Sol^\mu_{X=e_2})=\eta[X:=r'](\Sol^\mu_{X=e_2})$, using that $X$ does
not occur in $\Sol^\mu_{X=e_2}$. 
Moreover, we find that $r'=r_1\vee r_2\geq\eta[X:=r_1](\Sol^\mu_{X=e_2})\vee\eta[X:=r_2](\Sol^\mu_{X=e_2\vee e_3})=
\eta[X:=r'](\Sol^\mu_{X=e_2})\vee\eta[X:=r'](\Sol^\mu_{X=2_2\vee e_3})$.
So, it follows that both sides of the conditional satisfy the required proof obligation and therefore, we are ready with this case. 
\end{itemize}

For case 2 of Lemma \ref{la:generic1} and the minimal fixed-point, we consider some \valuation $\eta$ and we assume there is an $r\in\hat{\Real}$ such that 
$r=\eta[X:=r](\conda{(e_1[X:=\Sol^\mu_{X=e_2}])}{\Sol^\mu_{X=e_2}}{\Sol^\mu_{X=e_2\vee e_3}})$. 
We must show that there is an $r'\leq r$
such that $r'\geq \eta[X:=r'](\conda{e_1}{e_2}{e_3})$. We distinguish two cases.
\begin{itemize}
\item
First assume $\eta[X:=r](e_1[X:=\Sol^\mu_{X=e_2}])< 0$. In that case $r=\eta(\Sol^\mu_{X=e_2})$. 
By the induction hypothesis and Lemma \ref{la:generic2} it follows that there is an $r'\leq r$ such that
$r'=\eta[X:=r'](e_2)$. So, we derive
\[\begin{array}{l}
\eta[X:=r'](e_1)\leq \eta[X:=r](e_1) = 
\eta[X:=\eta(\Sol^\mu_{X=e_2})](e_1)=\\
\hspace*{1cm}\eta(e_1[X:=\Sol^\mu_{X=e_2}])=
\eta[X:=r](e_1[X:=\Sol^\mu_{X=e_2}])<0.
\end{array}\]
Hence, $\eta[X:=r'](\conda{e_1}{e_2}{e_3})$ is equal to $\eta[X:=r'](e_2)$.
Hence, $r'= \eta[X:=r'](\conda{e_1}{e_2}{e_3})$,
which implies what had to be shown. 
\item
Now assume $\eta[X:=r](e_1[X:=\Sol^\mu_{X=e_2}])\geq 0$. Hence, $r=\eta(\Sol^\mu_{X=e_2}\vee \Sol^\mu_{X=e_2\vee e_3})$. 
From this, it follows that $r\geq \eta(\Sol^\mu_{X=e_2\vee e_3})$.
So, using the induction hypothesis
and Lemma \ref{la:generic2} there is an $r'\leq r$ such that $r'=\eta[X:=r'](e_2\vee e_3)$.
So, we can also derive that $r'=\eta[X:=r'](e_2)\vee \eta[X:=r'](e_3)\geq \eta[X:=r'](e_2)$. 
Hence, $r'$ is larger than both sides of the conditional operator, and we can conclude $r'\geq \eta[X:=r'](\conda{e_1}{e_2}{e_3})$,
finalising the proof in this case.
\end{itemize}
This finishes the proof for the minimal fixed-point, and we continue with the maximal fixed-point $\sigma=\nu$. 

We again apply Lemma \ref{la:generic1}. So, fix some \valuation $\eta$. For case 1 of Lemma \ref{la:generic1} 
consider an $r$ such that $r=\eta[X:=r](\conda{e_1}{e_2}{e_3})$. We are ready with this case if we have shown 
that there is an $r'\geq r$ and 
$r'\leq \eta[X:=r'](\conda{(e_1[X:=\Sol^\nu_{X=e_2}\vee \Sol^\nu_{X=e_3}])}{\Sol^\nu_{X=e_2}}{\Sol^\nu_{X=e_3}})$.
We distinguish two cases.
\begin{itemize}
\item
First we consider the case where $\eta[X:=r](e_1)< 0$. Then $r=\eta[X:=r](e_2)$. 
Using the induction hypothesis and by applying Lemma \ref{la:generic2}, we know that there
is an $r'\geq r$ such that $r'=\eta[X:=r'](\Sol^\nu_{X=e_2})$.
We also see that $r'$ satisfies $r'=\eta[X:=r'](\Sol^\nu_{X=e_2})\leq \eta[X:=r'](\Sol^\nu_{X=e_2}\vee \Sol^\nu_{X=e_3})$.
So, we can conclude that $r'\leq \eta[X:=r'](\conda{(e_1[X:=\Sol^\nu_{X=e_2}\vee \Sol^\nu_{X=e_3}])}{\Sol^\nu_{X=e_2}}{\Sol^\nu_{X=e_3}})$
as we had to show.
\item
Second we consider the case $\eta[X:=r](e_1)\geq 0$. Then $r=\eta[X:=r](e_2\vee e_3)$. So, $r=\eta[X:=r](e_2)$ or
$r=\eta[X:=r](e_3)$. We assume that the first case holds, as the proof for the second case is perfectly symmetric. 
Hence, using the induction
hypothesis and Lemma \ref{la:generic2} there is an $r'\geq r$ such that $r'=\eta[X:=r'](\Sol^\nu_{X=e_2})$. 
We derive
\[\begin{array}{l}
\eta[X:=r'](e_1[X:=\Sol^\nu_{X=e_2}\vee \Sol^\nu_{X=e_3}])\geq 
\hspace*{1cm}\eta[X:=r'](e_1[X:=\Sol^\nu_{X=e_2}]) = \\
\eta[X:=r'](e_1)\geq \eta[X:=r'](e_1) \geq 0.
\end{array}\]
So, 
\[\begin{array}{l}
\eta[X:=r'](\conda{(e_1[X:=\Sol^\nu_{X=e_2}\vee \Sol^\nu_{X=e_3}])}{\Sol^\nu_{X=e_2}}{\Sol^\nu_{X=e_3}})=\\
\hspace*{1cm}\eta[X:=r'](\Sol^\nu_{X=e_2}\vee\Sol^\nu_{X=e_3}) \geq\\ 
\hspace*{1cm}\eta[X:=r'](\Sol^\nu_{X=e_2})=r'.
\end{array}\]
as we had to prove. 
\end{itemize}

Now we concentrate on case 2 of Lemma \ref{la:generic1} for the maximal fixed-point. So, we consider
an $r\in\hat{\Real}$ such that $r=\eta[X:=r](\conda{(e_1[X:=\Sol^\nu_{X=e_2}\vee\Sol^\nu_{X=e_3}])}{\Sol^\nu_{X=e_2}}{\Sol^\nu_{X=e_3}})$, 
and we must show that
an $r'\geq r$ exists such that $r'\leq \eta[X:=r'](\conda{e_1}{e_2}{e_3})$. Again, we distinguish
two cases.
\begin{itemize}
\item
Assume $\eta(e_1[X:=\Sol^\nu_{X=e_2}\vee \Sol^\nu_{X=e_3}])< 0$. 
It follows that $r=\eta(\Sol^\nu_{X=e_2})$. Using the induction hypothesis and
Lemma \ref{la:generic2} it follows that there is an $r'\geq r$ such that $r'=\eta[X:=r'](e_2)$. 
So, it follows that $r'\leq\eta[X:=r'](e_2\vee e_3)$. As $\eta[X:=r'](\conda{e_1}{e_2}{e_3})$ must be equal
to one of these, we find that $r'\leq \eta[X:=r'](\conda{e_1}{e_2}{e_3})$ as we had to show.
\item
Assume $\eta(e_1[X:=\Sol^\nu_{X=e_2}\vee \Sol^\nu_{X=e_3}])\geq 0$. 
It follows that $r=\eta(\Sol^\nu_{X=e_2}\vee \Sol^\nu_{X=e_3})$. 
Assume $\eta(\Sol^\nu_{X=e_2})\geq \eta(\Sol^\nu_{X=e_3})$. The reverse assumption follows the same reasoning steps.
Hence, $r=\eta(\Sol^\nu_{X=e_2})$.
Using the induction hypothesis and
Lemma \ref{la:generic2} there is an $r'\geq r$ such that $r'=\eta[X:=r'](e_2)$. 
So, $\eta[X:=r'](e_1)\geq \eta[X:=r](e_1)=\eta(e_1[X:=\Sol^\nu_{X=e_2}])=\eta(e_1[X:=\Sol^\nu_{X=e_2}\vee \Sol^\nu_{X=e_3}])\geq 0$. 
Hence, $\eta[X:=r'](\conda{e_1}{e_2}{e_3})=\eta[X:=r'](e_2\vee e_3)\geq \eta[X:=r'](e_2)=r'$ 
and this is sufficient to finish the proof for the lemma for this case.
\end{itemize}
With this we have proven that this theorem holds for conditional expressions.
\item
We now consider the case with a minimal fixed-point where $e$ is a conjunctive normal form.
Using property E6 it is possible to solve all conjuncts separately.
So, without loss of generality, we assume that $e$ has the shape
\begin{equation} 
\label{eq:minimal_single}
e=\bigvee_{j\in J}(c_{j}{\cdot} X + c'_{j}{\cdot}\eqninf(X) + f_{j})\vee m
\end{equation}
where $c_j\geq 0$ and $c'_j\in\{0,1\}$ are constants such that $c_j$ and $c_j'$ are not both $0$, 
and $f_j$ and $m$ are expressions in which $X$ does not occur.
We show that the right-hand side of equation (\ref{eq:minimal_solution}) without the initial conjunction 
provides the required term $\Sol^\mu_{X=e}$ in this theorem. Concretely, 
\begin{equation}
\label{eq:solution_minimal_single}
\begin{array}{l}
\displaystyle \Sol^\mu_{X=e} = \cond{(\eqinf(\bigvee_{j\in J}f_{j}))\\\displaystyle\hspace*{1.8cm}}
{(\cond{\eqninf(m)}{-\infty}{
(\cond{((
\hspace*{-0.35cm}\bigvee_{j\in J\mid c_{j}\geq 1}\hspace*{-0.35cm}f_{j}+(c_{j}-1){\cdot} U)\vee
\hspace*{-0.45cm}\bigvee_{j\in J\mid c'_{j}=1}\hspace*{-0.45cm}\infty)}{U}{\infty}))
}\\
\hspace*{1.8cm}\,}{~~~\,\infty}
\end{array}
\end{equation}
where $\displaystyle U=m\vee\bigvee_{j\in J\mid c_{j}<1}\frac{1}{1-c_{j}}{\cdot} f_{j}$.

Using Lemma \ref{la:generic1} we must prove case 1 and 2 for a \valuation $\eta$.
We start with case 1. So, consider the smallest $r=\eta[X:=r](e)$. We define $r'=\eta(\Sol^\mu_{X=e})$
automatically satisfying the first proof obligation of Lemma \ref{la:generic1}, where it should be noted
that $X$ does not occur in $\Sol^\mu_{X=e}$. 
Hence, we only need to show that $r'\leq r$. We distinguish a number of cases.
\begin{itemize}
\item
Suppose there is some $f_j$ such that $\eta[X:=r](f_j)=\infty$. In that case both $r=\infty$ and $r'=\infty$.
So, clearly, $r'\leq r$. Below we can now assume that there is no $j\in J$ such that $\eta[X:=r](f_j)=\infty$. 
\item
Now assume $\eta(m)=-\infty$. 
By the previous case we know that $f_j\not=\infty$. 
In that case $r'=\eta(\Sol^\mu_{X=e})=-\infty$, as $\eta(\eqninf(m))=-\infty\leq 0$, and hence, $r'\leq r$. 
Below we assume that $\eta(m)\not=-\infty$. 
\item
If there is at least one $j\in J$ such that $c_j'=1$, then $r=\eta[X:=r](e)=\infty$. The reason for this
is that $r>-\infty$, as $r$ at least has the value $\eta(m)$. But then $r=\infty$ as $\eta[X:=r](c'_j{\cdot}\eqninf(X))=\infty$. Clearly, $r'\leq r$. So, below we can assume that $c'_j=0$ for all $j\in J$. 
\item
With the assumptions above, we can write $e$ more compactly.
\[e=\bigvee_{j\in J}(c_{j}{\cdot} X + f_{j})\vee m.\]
We know that $r$ is the smallest value satisfying  
\[r=\eta[X:=r](e)=\eta[X:=r](\bigvee_{j\in J}(c_{j}{\cdot} X + f_{j})\vee m).\]

Consider $r_1=\eta(m\vee\bigvee_{j\in J\mid c_j<1}(\frac{f_j}{1-c_j}))$. 
\begin{itemize}
\item
First assume that there is no
$j\in J$ with $c_j\geq 1$ such that $r_1< \eta[X:=r_1](c_j{\cdot} X+f_j)$. We show that $r_1$ is the solution, i.e.,
$r_1=r$. 

Consider the case where that $\eta(m)\geq \frac{\eta(f_j)}{1-c_j}$ for all $j\in J$ such that $c_j<1$. So, $r_1=\eta(m)$. 
In this case $\eta(m)$ is a solution as (i) for those $j\in J$ such that $c_j<1$ it holds that $\eta(m)\geq c_j{\cdot}\eta(m)+\eta(f_j)$. 
Moreover, by the assumption of this item for those $j\in J$ such that $c_j\geq 1$, $\eta(m)<c_j{\cdot}\eta(m)+\eta(f_j)$  (ii). 
It is obvious that $\eta(m)$ must be the smallest solution. 

Now consider the case where 
$\eta(m)<\frac{\eta(f_j)}{1-c_j}$ for some $j\in J$. In this case it holds that $r_1=\bigvee_{j\in J\mid c_j<1}
(\frac{\eta(f_j)}{1-c_j})=\frac{\eta(f_{j'})}{1-c_{j'}}$ for some $j'\in J$, where $j'$ is the index of the largest
solution. It is straightforward to check that $\frac{\eta(f_{j'})}{1-c_{j'}}$ is a solution. It is also the smallest
solution, which can be seen as follows. Suppose there were a smaller solution $r_2<\frac{\eta(f_{j'})}{1-c_{j'}}$. 
Hence, $r_2=\eta(m)\wedge\bigwedge_{j\in J}(c_j{\cdot}r_2+\eta(f_j))\geq c_{j'}{\cdot}r_2+\eta(f_{j'})$. 
From this it follows that
$r_2\geq\frac{\eta(f_{j'})}{1-c_{j'}}$ contradicting that it is a smaller solution. 

It follows that $r_1=r$ is the smallest solution. Furthermore, 
$r'=\eta(\Sol^\mu_{X=e})=\eta(U)=\eta(m\vee\bigvee_{j\in J\mid c_j<1}\frac{f_j}{1-c_j})=r_1=r$. Obviously, $r'\leq r$.
\item
Now assume that there is a $j\in J$ with $c_j\geq 1$ such that $r_1< \eta[X:=r_1](c_j{\cdot} X+f_j)$.
We show that $r=\infty$. Using the argumentation of the previous item, the smallest solution $r$ is at least 
$r_1$. But clearly, $r_1$ is larger than the non infinite solution of $X= \eta[X:=r_1](c_j{\cdot} X+f_j)$
as by the assumption $r_1>\frac{\eta(f_j)}{1-c_j}$. Note that if $c_j> 1$ this solution exists, and if $c_j=1$ there
is only a finite solution if $\eta(f_j)=0$, but in this latter case the assumption of this item is invalid. 
Hence, the only remaining minimal solution is $r=\infty$. Clearly,
for any choice of $r'$ it holds that $r'<r$.
\end{itemize}
\end{itemize}

Now we concentrate on case 2 for the minimal fixed-point of Lemma \ref{la:generic1}. 
We know that $r=\eta(\Sol^\mu_{X=e})$ is the minimal solution for $\eta(\Sol^\mu_{X=e})$ and
we must show that there is an $r'\leq r$ such that $r'\geq \eta[X:=r'](e)$. We take $r'=r$ 
leaving us with the obligation to show that $r\geq \eta[X:=r](e)$.

We distinguish the following cases.
\begin{itemize}
\item
Assume that there is some $f_j$ such that $\eta(f_j)=\infty$. In that case $r=\infty$, which 
satisfies $\infty\geq \eta[X:=\infty](e)$.  From here we assume that $\eta(f_j)<\infty$ for all $j\in J$.
\item
Now assume that $\eta(m)=-\infty$. Note that for any $j\in J$ it is the case that $c_j\not=0$ or $c_j'\not=0$. 
In this case, $r=-\infty$ is the solution as 
$\eta[X:=-\infty](e)=-\infty$ and this implies our proof obligation. So, in the steps below we assume
that $\eta(m)>-\infty$. 
\item
With the conditions above, if there is at least one $j\in J$ such that $c'_j=1$, then $r=\infty$ is the fixed-point
satisfying our proof obligation. Below we assume that for all $j\in J$ it holds that $c'_j=0$. 
\item
As all $c'_j$ can be assumed to be $0$, we can simplify the equation for $X$ to:
\[\mu X=\bigwedge_{j\in J}(c_j{\cdot}X+f_j)\vee m.\]
We find $\eta(U)=\eta(m\vee \bigvee_{j\in J\mid c_j<1}\frac{f_j}{1-c_j})$. 
If there is no $j\in J$ with $c_j\geq 1$ such that $\eta(f_{j})-\eta((1-c_j){\cdot}U)>0$ we find that $r=\eta(\Sol^\mu_{X=e})=\eta(U)$.
We show that $r\geq\eta[X:=r](e)$. If $\eta(m)\geq \bigvee_{j\in J\mid c_j<1}\frac{\eta(f_j)}{1-c_j}$ then $r=\eta(m)$.
For a $j\in J$ with $c_j<1$ we find that $c_j{\cdot}\eta(m)+\eta(f_j)\leq \eta(m)$ as $\eta(m)\geq \frac{f_j}{1-c_j}$. 
For a $j\in J$ with $c_j\geq 1$, we find by the condition above that $\eta(f_{j}+c_j{\cdot}U)\leq \eta(U)$, 
or in other words $\eta(f_{j}+c_j{\cdot}m)\leq \eta(m)$. So, $r=\eta(m)=\eta[X:=r](e)$ as we had to show. 

Otherwise, there is some $j'\in J$ with $c_{j'}<1$
such that $\frac{\eta(f_{j'})}{1-c_{j'}}=\bigvee_{j\in J\mid c_j<1}\frac{\eta(f_j)}{1-c_j}$. 
In this case $r=\frac{\eta(f_{j'})}{1-c_{j'}}$. From the conditions, we can see that $r=\eta[X:=r](e)$ as
we had to show.
\item
Now assume that there is a $j\in J$ with $c_j\geq 1$ such that $\eta(f_{j})-\eta((1-c_j){\cdot}U)>0$. In this
case $r=\eta(\Sol^\mu_{X=e})=\infty$, clearly satisfying our proof obligation.
\end{itemize}
This finishes our proof for a minimal fixed-point equation. 

\item
The last case of this proof regards a maximal fixed-point equation. 
The proof is similar to that of the minimal fixed-point
equation.
The maximal fixed-point equation that we consider has the shape
\[\nu X= \bigvee_{i\in I}(\bigwedge_{j\in J_i}(c_{ij}{\cdot} X + c'_{ij}{\cdot}\eqninf(X) + f_{ij})\wedge m_i).\]
Due to property E7 we can solve the disjuncts separately, and take the disjunction of these solutions as
the solution of this equation. So, we concentrate on a maximal fixed-point equation of the shape
\begin{equation}
\label{eq:singlemaximal}
\nu X= \bigwedge_{j\in J}(c_{j}{\cdot} X + c'_{j}{\cdot}\eqninf(X) + f_{j})\wedge m
\end{equation}
and we show that the solution is
\[\begin{array}{l}
\Sol^\nu_{X=e} =\cond{\eqinf(m)\\\displaystyle\hspace*{2.3cm}}
{\conda{(\bigwedge_{j\in J\mid c_{j}\geq 1\wedge c'_j=0}f_{j}+(c_{j}-1){\cdot} U)}{-\infty}{U}\\
\hspace*{2.3cm}\,}{~~~\,\infty}
\end{array}
\]
where \[U=m\wedge\bigwedge_{j\in J\mid c_{j}<1\wedge c'_{j}=0}\frac{1}{1-c_{j}}{\cdot} f_{j}.\]
We write $e$ for the right-hand side of Equation (\ref{eq:singlemaximal}).

We use Lemma \ref{la:generic1} for the largest fixed-point. 
First, we concentrate on case 1. So, for the largest $r$ satisfying
$r=\eta[X:=r](e)$ we must show that there is an $r'\geq r$ such that $r'\leq \eta[X:=r'](\Sol^\nu_{X=e})$. We take $r'=\eta(\Sol^\nu_{X=e})$. As $X$ does not occur in $\Sol^\nu_{X=e}$
we have that $\eta(\Sol^\nu_{X=e})\leq \eta[X:=\eta(\Sol^\nu_{X=e})](\Sol^\nu_{X=e})$. 
So, our only proof obligation is $\eta(\Sol^\nu_{X=e})\geq r$. We split the proof in a number of cases.
\begin{itemize}
\item
In case $\eta(m)=\infty$, we find $r'=\eta(\Sol^\nu_{X=e})=\infty$
and our proof obligation is met.
Below, we assume that $\eta(m)<\infty$. 
\item
Now assume that for all  $j\in J$ with $c_j\geq 1$ and $c_j'=0$, it holds that $\eta(f_j+(c_j-1){\cdot}U)\geq 0$.
We show below that no $r''>\eta(U)$ can be a solution for 
(\ref{eq:singlemaximal}). In this case $\eta(\Sol^\nu_{X=e})=\eta(U)$ from which it follows that $\eta(\Sol^\nu_{X=e})\geq r$. 

Assume $\eta(m)\leq\frac{\eta(f_j)}{1-c_j}$ for all $j\in J$ such that $c_j<1$ and $c_j'=0$.
If $r''>\eta(m)$, then clearly $r''$ is not a solution, as the solution is at most $\eta(m)$. 

If the above does not hold, there is at least one $j\in J$ with $c_j<1$ and $c_j'=0$
such that $\eta(m)>\frac{\eta(f_j)}{1-c_j}$. 
Assume $r''$ is larger than the smallest conjunct $\frac{\eta(f_j)}{1-c_j}$ for any $j\in J$ with 
$c_j<1$ and $c'_j=0$. If $r''$ were a solution of (\ref{eq:singlemaximal}), then it satisfies
$r''\leq c_j{\cdot} r''+\eta(f_j)$. This is equivalent to $r''\leq \frac{\eta(f_j)}{1-c_j}$ contradicting
the assumption. 
\item 
Now assume that there is some $j\in J$ with $c_j\geq 1$ and $c_j'=0$ for which 
it holds that $\eta(f_j+(c_j-1){\cdot}U)< 0$.
In this case $\eta(\Sol^\nu_{X=e})=-\infty$. In the previous item it was shown that any solution $r''$ for 
(\ref{eq:singlemaximal}) it holds that $r''\leq \eta(U)$. Moreover, $r''$ has to satisfy that 
$r''\leq c_j{\cdot} r''+\eta(f_j)$. If $r''\in \Real$ and $c_j\not=1$, this is the same as saying that 
$r''\geq \frac{\eta(f_j)}{1-c_j}$, and combining it with the assumption of this item, it follows
that $r''>\eta(U)$, leading to a contradiction. If $r''\in \Real$ and $c_j=1$ we derive that both $\eta(f_j)\geq 0$ 
and $\eta(f_j)<0$, also leading to a contradiction.
Hence, in both cases $r''\not\in\Real$, meaning that $r''=-\infty$. 
In this case this is exactly the value of $\eta(\Sol^\nu_{X=e})$, finishing this item of the proof.
\end{itemize}
In the last part of the proof we focus our attention on case 2 of Lemma \ref{la:generic1} for maximal 
fixed-points. 
So, we consider $r=\eta(\Sol^\nu_{X=e})$ and we have to find an $r'\in\hat{\Real}$ that satisfies 
$r'\geq r$ and $r'\leq\eta[X:=r'](e)$. We take $r'=r$, and this means that we only have to show that $\eta(\Sol^\nu_{X=e})$
satisfies $\eta(\Sol^\nu_{X=e})\leq\eta[X:=\eta(\Sol^\nu_{X=e})](e)$. We again walk through a number of cases. 
\begin{itemize}
\item
First assume that $\eta(m)=\infty$. Then $\eta(\Sol^\nu_{X=e})=\infty$ and it clearly satisfies (\ref{eq:singlemaximal}). 
If $\eta(m)=-\infty$, then the right-hand side of (\ref{eq:singlemaximal}) equals $-\infty$. In this case 
$\eta(U)=-\infty$, and therefore, $\eta(\Sol^\nu_{X=e})=-\infty$, which satisfies our proof obligation. 
So, below we can safely assume that $\eta(m)\not=\pm\infty$. 
\item
Now assume there is a $j\in J$ with $c_j'=1$. As $\eta(m)>-\infty$, clearly, $\eqninf(\eta(m))=\infty$, and
this disjunct equals $\infty$, being larger than $\eta(\Sol^\nu_{X=e})$, satisfying our proof obligation. 
So, we can safely assume that $c_j'=0$ for all $j\in J$.

\item
Assume that for all $j\in J$ such that $c_j\geq 1$ and $c'_j=0$, it holds that $\eta(f_j+(c_j-1){\cdot}U)\geq 0$.
We find that $\eta(\Sol^\nu_{X=e})=\eta(U)$. Assume that $\eta(U)=\eta(m)$, which means that $\eta(m)\leq \frac{\eta(f_j)}{1-c_j}$ for
all $j\in J$ with $c_j<1$ and $c'_j=0$. We see that $\eta(m)$ is a solution for (\ref{eq:singlemaximal})
by showing that $c_j{\cdot}\eta(m)+\eta(f_j)\geq \eta(m)$ for all $j\in J$.
First consider such a $j\in J$ such that $c_j<1$. The identity above follows directly from $\eta(m)\leq \frac{\eta(f_j)}{1-c_j}$. Second consider such a $j\in J$ such that $c_j\geq 1$. The required identity follows from the assumption 
that $\eta(f_j+(c_j-1){\cdot}U)\geq 0$. 

Now assume that $\eta(U)=\frac{\eta(f_j)}{1-c_j}$ for some $j\in J$ with $c_j<1$ as this is the smallest conjunct
of $\eta(U)$. We see that $\eta(U)$ satisfies (\ref{eq:singlemaximal}). For those $j'\in J$ with $c_{j'}<1$ we
find that 
$\eta(c_{j'}{\cdot}U+f_{j'})\geq \eta(U)$ as it is equivalent to stating that $\eta(U)\leq \frac{\eta(f_{j'})}{1-c_{j'}}$.
For the same reason, we see that $\eta(c_{j}{\cdot}U+f_{j})= \eta(U)<\eta(m)$. 
Now consider those $j'\in J$ with $c_{j'}>1$. By the condition at the beginning of this item it follows that
$\eta(f_j+(c_j){\cdot}U)\geq \eta(U)$. Hence, the right-hand side of (\ref{eq:singlemaximal}) reduces to
$\eta(U)$ as we had to show. 

\item
Assume that for some $j\in J$ such that $c_j\geq 1$ and $c'_j=0$, it holds that $\eta(f_j+(c_j-1){\cdot}U)\geq 0$.
Hence, $\eta(\Sol^\nu_{X=e})=-\infty$, rendering our proof obligation trivial. 
\end{itemize}
This finishes all cases we had to go through in the proof, proving the theorem. 

\end{itemize}

\end{proof}
\section{Validity of E6 and E7}
\label{app:proofs}
We prove that the implication E6 is valid. The validity of E7 follows by duality. 
\begin{proof}
%
We show, given $\mu X=e_1~\equiv~\mu X=f_1\textrm{ and }\mu X=e_2~\equiv~\mu X=f_2$, that
\[\mu X=e_1\wedge e_2~\equiv~\mu X=f_1\wedge f_2\]
holds using Lemma \ref{la:generic1}. As cases 1.\ and 2.\ are symmetric, we only prove case 1.
So, we must show that 
for the smallest $r \in \hat{\Real}$ such that $r=\eta[X:=r](e_1\wedge e_2)$, it holds that there in an $r'$
satisfying that $r'\leq r$ and $r'\geq\eta[X:=r'](f_1\wedge f_2)$. 

We know that $r=\eta[X:=r](e_i)$ for $i=1$ or $i=2$ as $\hat{\Real}$ is totally ordered.
As $\mu X=e_i~\equiv~\mu X=f_i$, we know by Lemma \ref{la:generic2} that there is an $r'$ such that $r'\leq r$ and
$r'=\eta[X:=r'](f_i)$. Clearly, $r'$ also satisfies that $r'=\eta[X:=r'](f_i)\geq  \eta[X:=r'](f_1\wedge f_2)$.
This finishes the proof.
\end{proof}

\end{document}